%
\documentclass[runningheads]{llncs}
\usepackage[T1]{fontenc}
%
\usepackage{amsmath}
\usepackage{multicol}
\usepackage{amsfonts}
\usepackage{xcolor}
\usepackage{graphicx}
\usepackage{tcolorbox}

\usepackage{tcolorbox}
\usepackage{tikz}
\usepackage[linesnumbered, ruled, vlined]{algorithm2e}
\usepackage{caption}
\usepackage{subcaption}
\usepackage{cite}

\usepackage{url}
\usepackage{hyperref}
\hypersetup{
    colorlinks=true,
    linkcolor=blue,
    filecolor=magenta,      
    urlcolor=cyan,
}
%
%
\begin{document}
\title{Multi-Slot Tag Assignment Problem in Billboard Advertisement}
\author{Dildar Ali \and Suman Banerjee \and Yamuna Prasad}
\authorrunning{Ali et al.} 
\institute{Indian Institute of Technology Jammu,
J \& K-181221, India. \\
\email{\{2021rcs2009,suman.banerjee,yamuna.prasad\}@iitjammu.ac.in}}
\maketitle
\begin{abstract}
  Nowadays, billboard advertising has emerged as an effective advertising technique due to higher returns on investment. Given a set of selected slots and tags, how to effectively assign the tags to the slots remains an important question. In this paper, we study the problem of assigning tags to the slots such that the number of tags for which influence demand of each zone is satisfied gets maximized. Formally, we call this problem the \textsc{Multi-Slot Tag Assignment} Problem. The input to the problem is a geographical region partitioned into several zones, a set of selected tags and slots, a trajectory, a billboard database, and the influence demand for every tag for each zone. The task here is to find out the assignment of tags to the slots, such the number of tags for which the zonal influence demand is satisfied is maximized. We show that the problem is \textsf{NP-hard}, and we propose an efficient approximation algorithm to solve this problem. A time and space complexity analysis of the proposed methodology has been done. The proposed methodology has been implemented with real-life datasets, and a number of experiments have been carried out to show the effectiveness and efficiency of the proposed approach. The obtained results have been compared with the baseline methods, and we observe that the proposed approach leads to a number of tags whose zonal influence demand is satisfied.


\keywords{Billboard Advertisement, Influence, Tag, Slot, Advertiser}
\end{abstract}
\section{Introduction}
Billboard Advertisement has emerged as an effective out-of-home advertisement technique due to its ensured return on investment\footnote{\url{https://topmediadvertising.co.uk/billboard-advertising-statistics/}}. In this advertisement technique, an influence provider (e.g., Lamar, Sigtel, etc.) owns several digital billboards, and they can be allocated slot-wise to a group of advertisers depending on their budget. One of the relevant research questions that have been addressed in the literature \cite{ali2022influential,ali2023influential,ali2024influential,zhang2020towards} is that, given a positive integer $k$, which slots should be chosen such that the influence is maximized? The literature has mentioned that the probability of a user being influenced by advertisement content strongly depends on the context. In the literature \cite{ali2024influential}, the notion of context has been formalized as a tag. The influence probability of a user is dependent on the tag, which implies that it is not only sufficient to select influential slots, but it is equally important to allocate tags to the appropriate slots as well. In the billboard advertisement, the advertiser's goal will be to maximize the allocation of tags to slots under budget constraints, and, at the same time, the zone-specific influence requirement of tags should be satisfied. So, one important question arises: given a set of billboard slots and tags, how do we allocate tags to the slots so that an advertiser's allocated tags are maximized? We call this the \textsc{Multi-Slot Tag Assignment in Billboard Advertisement (MSTA)}.

\paragraph{\textbf{Background.}}
To the best of our knowledge, Zhang et al. \cite{10.1145/3219819.3219946} was the first to introduce the influence maximization problem in billboard advertisements, and later, several studies \cite{ali2022influential,ali2023influential} were studied in a similar direction. Later, Zhang et al. \cite{zhang2020towards} further extended their work to find influential billboards under budget constraints. Similarly, Wang et al. \cite{8604082} find the best $k$ influential billboard considering targeted adverting. There exist a few works of literature \cite{ali2024regret,zhang2021minimizing} on regret minimization in the context of billboard advertisements. Next, Ali et al. \cite{ali2024influential} were the first to introduce influential slots and tags selection problem in the context of billboard advertisement. Few studies \cite{ali2024regret,ali2024minimizing} have also considered zone-specific influence demand and budget jointly in multi-advertiser settings. None of the studies has considered the problem of allocating tags to the billboard slots in the context of the billboard advertisement. Next, we discuss the motivation behind this tag allocation problem under zonal influence constraint. 

\paragraph{\textbf{Motivation.}}
Previous studies \cite{8604082,ali2022influential,ali2023influential,ali2024influential,yang2014modeling,zhang2019optimizing,zhang2020towards,zhang2021minimizing} assumed that all zones are equally valuable to advertisers when selecting billboard slots. However, this assumption doesn't always hold in real-life scenarios \cite{ali2024minimizing,ali2024regret}. Consider a city where the advertiser is aware of the zone-specific populations. Assigning advertising content (i.e., tag) for a high-end product to a zone where economically weaker people live would be ineffective, as they are unlikely to purchase the product, wasting the advertiser's budget. Therefore, the advertiser's objective should be to allocate tags to the most suitable billboard slots belonging to different zones to maximize total influence within budget constraints. So, studying the \textsc{Multi-Slot Tag Assignment in Billboard Advertisement under the Zonal Influence constraint} problem is important.

\paragraph{\textbf{Our Contribution.}}   
This paper introduces the problem of tag allocation in billboard advertisements. The key contributions of our work are as follows:
\begin{itemize}
    \item We formulate this as a discrete optimization problem and show this problem is NP-Hard and hard to approximate within a constant factor.
    \item We propose a Cost-Effective Greedy algorithm to solve this problem.
    \item We analyze the time and space requirements and provide an approximation guarantee for the proposed solution approach.
    \item We have shown a theoretical analysis of the proposed method and experimental results on the real-world datasets. 
\end{itemize}

\paragraph{\textbf{Organization of the Paper.}} The rest of the paper has been organized as follows. Section \ref{Sec:Problem} describes background concepts and defines our problem formally. The proposed solution approach has been described in Section \ref{Sec:PSA}. Section \ref{Sec:Experimental_Evaluation} contains the experimental evaluation of the proposed solution approach. Finally, Section \ref{Sec:Conclusion} concludes our study.

\section{Preliminaries and Problem Definition} \label{Sec:Problem}
This section introduces the preliminary concepts and formally defines our problems. We first describe billboard, billboard slot, and influence in the following.

\paragraph{\textbf{Billboard, Billboard Slot, Influence}} In this section, we describe some preliminary concepts and define our problem formally. Consider a set of $n$ billboards $B^{'}=\{b_1, b_2, \ldots, b_n\}$ are owned by an influence provider and placed at different locations (e.g., street junctions, airports, metro stations, shopping malls, etc.) in a city. We assume that all the billboards are running for $\Delta$ time duration, and the goal is to play some display content on digital billboards so that a bulk of people can observe the content, which will help brand a product. In practice, every billboard slot is associated with some cost, and a commercial house can rent the slot by paying the cost to the influence provider. We formally state the notion of billboard slot in Definition \ref{Def:Billboard-slot}.
\begin{definition}[Billboard Slot] \label{Def:Billboard-slot}
A billboard slot of duration $\Delta$ is defined as a tuple $(b_i,[t,t+\Delta])$ where $b_i \in B^{'}$ and $t \in \{T_1, T_1+\Delta+1, T_1+2\Delta+1, \ldots, T_2 - \Delta+1 \}$.
\end{definition}
The set of all billboard slots is denoted by $\mathcal{BS}$, and each of them is associated with some cost which is formalized as a cost function $C: \mathcal{BS} \longrightarrow \mathbb{R}_{\geq 0}$. For any slot $bs \in \mathcal{BS}$, the cost of $bs$ is denoted as $C(bs)$. It can be observed that if there are $n$ billboards and they are operated for the interval $[T_1, T_2]$, then the number of total slots is $m=\frac{T_2-T_1}{\Delta} \cdot n$. Now, from the available slots, depending on the available budget $\mathcal{B}$, the advertiser needs to choose a subset of the slots. Now, the question arises of how to quantify the influence of a subset of billboard slots. We state this notion in Definition \ref{Def:Influence}.

\begin{definition} [Influence of a Subset of Billboard Slots] \label{Def:Influence}
Given a trajectory database  $\mathbb{T}$, billboard database $\mathbb{B}$, a subset of billboard slots $S \subseteq \mathcal{BS}$, and influence probability from every billboard to trajectory, i.e., $Pr(b_i,t_j)$ for each $b_i \in \mathbb{B}$ and $t_j \in \mathbb{T}$, the aggregated influence of all the slots in $S$ can be given by Equation \ref{Eq:Influence}.
    \begin{equation} \label{Eq:Influence}
    \mathcal{I}(\mathcal{S})= \underset{t_{j} \in \mathbb{T}}{\sum} [1-  \underset{b_{i} \in \mathcal{S}}{\prod}(1-Pr(b_{i}, t_{j}))]
    \end{equation}
\end{definition}

The influence function $\mathcal{I}()$ assigns each subset of billboard slots to its corresponding influence value, defined as $\mathcal{I}: 2^{\mathcal{BS}} \rightarrow \mathbb{R}^{+}_{0}$, with $\mathcal{I}(\emptyset) = 0$. This function, as described in Equation \ref{Eq:Influence}, is commonly used in advertising literature \cite{ali2023influential,zhang2019optimizing,zhang2020towards,zhang2021minimizing}.

\paragraph{\textbf{Zonal Influence.}}
Consider the city under consideration is divided into $\ell$ zones $\mathcal{Z}=\{z_1, z_2, \ldots, z_{\ell}\}$. For any subset of billboard slots $\mathcal{S}$, its influence confined to any zone $z_i \in \mathcal{Z}$ is denoted as $I_{z_i}(\mathcal{S})$. The zonal influence constraint ensures that the influence demand of a tag from every zone is satisfied. Next, we elaborate on the budget and zonal influence constraints.
\begin{itemize}
    \item \textbf{Budget Constraint}: This constraint emphasizes that the total selection cost should not exceed the budget, i.e., $\underset{bs \in \mathcal{S}}{\sum} C(bs) \leq \mathcal{B}$.
    \item \textbf{Zonal Influence Constraint}: This constraint emphasizes that the influence demand of every zone is satisfied, i.e., for every $i \in \{1,2, \ldots, \ell\}$, $I_{z_i}(\mathcal{S}) \geq \sigma_{i}$, where $\sigma_{i}$ denotes the zonal influence demand of the $i^{th}$ tag.
\end{itemize}

\paragraph{\textbf{Model and Problem.}} 
We address the multi-slot tag assignment problem $(\mathcal{A}, \mathcal{S}, $ $ \mathcal{T}, \mathcal{V}, \mathcal{C})$ involving an advertiser $\mathcal{A}$, a set of slots \(\mathcal{S} = \{bs_{1}, bs_{2}, \ldots, bs_{m}\}\), a set of tags \(\mathcal{T} = \{t_{1}, t_{2}, \ldots, t_{k}\}\), $\mathcal{V}=\{\sigma_1, \sigma_2, \ldots, \sigma_{\ell}\}$ denotes the zonal influence demand of tags of advertiser $\mathcal{A}$, and a non-negative cost matrix $\mathcal{C} = \{c_{ij} ~|~ bs_{i} \in \mathcal{S}, t_{j} \in \mathcal{T}\}$ which denotes the cost $c_{ij}$ for allocating tag $t_{j}$ to slot $bs_{i}$. The advertiser $\mathcal{A}$ is associated with $k$ number of tags, which can only be allocated to slots within specific zones where they are demanded.  At most, one tag can be allocated to one slot, as in the OMBM setting (e.g., see  \cite{ali2024effective}). Let a vector \(\mathcal{P} = \{p_{1}, p_{2}, \ldots, p_{k}\}\), where \(p_{i}\) contains the slots in which tag \(t_{i}\) is assigned and $\mathcal{I}(p_{i})$ denotes influence generated by slots in $p_{i}$. A tag $t_{i}$ is considered handled if the zone-specific influence demand $t_{i}(\sigma_{j})$, where $j \in \{1,2,\ldots,\ell\}$ is met; otherwise, it is not handled. In this paper, we focus on a major problem in which the budget for advertising tags is given \(\mathcal{B} \in \mathbb{Z}^{+}\), and subject to the zonal influence constraint, our main goal is to maximize the number of handled tags of advertiser $\mathcal{A}$. Next, we define our problems from the computational perspective. 
\par Now, given total budget \(\mathcal{B}\), we define the cost for handling advertiser $\mathcal{A}$ as the sum of the costs of the slots allocated to satisfy all the tags zone-specific influence demand \(\sigma\) associated with $\mathcal{A}$. We aim to maximize the number of handled tags while ensuring that the zonal influence constraint is met and the total cost of allocated slots does not exceed \(\mathcal{B}\).
\begin{center}
\begin{tcolorbox}
\underline{\textsc{Multi-Slot Tag Assignment in Billboard Advertisement (MSTA)}} \\
\textbf{Input:} $(\mathcal{A}, \mathcal{S}, \mathcal{T}, \mathcal{V}, \mathcal{C})$ and $\mathcal{B}$

\textbf{Task:} Find out an allocation such that the maximum number of tags of an advertiser can be handled subject to the zonal influence constraint and having total cost at most $\mathcal{B}$.
\end{tcolorbox} 
\end{center}

Next, we formalize our problems using integer linear programming. Assume, $x_{ij} \in \{0, 1\}$ denotes whether slot $bs_{i}$ is assigned to tag $t_{j}$ of advertiser $\mathcal{A}$ and $y_{j} \in \{ 0, 1\}$ denotes tag $t_{j}$ is handled. This problem can be formulated in integer linear programming as $\mathcal{IP}$.


\begin{center}
\begin{tcolorbox}
\underline{\textsc{Formulated Integer Program $(\mathcal{IP})$}} \\
\begin{tabular}{l | r}
\parbox{0.45\textwidth}{
\textbf{Maximize} $\quad \sum_{j=1}^{k} y_{j}$ \\
\textbf{subject to} \\
(1) $\quad \sum_{i=1}^{m} \sum_{j=1}^{k} c_{ij} x_{ij} \leq \mathcal{B},$ \\
(2) $\quad \sum_{i=1}^{k} \mathcal{I}(p_{i}) \geq \sum_{i=1}^{k} \sum_{j=1}^{\ell} t_{i}(\sigma_{j})$, $\quad \forall ~ p \in \mathcal{P}$\\
}
&
\parbox{0.6\textwidth}{
(3) $\quad \sum_{j=1}^{k} x_{ij} \leq 1, \quad \forall ~i \in [m]$,\\
(4)$\quad x_{ij} \in \{0,1\}, \quad \forall ~i \in [m], \forall ~j \in [k],$ \\
(5)$\quad y_{j} \in \{0,1\}, \quad \forall ~j \in [k]$
}
\end{tabular}
\end{tcolorbox}
\end{center}

The first constraint in the above $\mathcal{IP}$ ensures that the total allocation cost does not exceed the budget $\mathcal{B}$. The second constraint guarantees that every handled tag's zone-specific influence demand should be satisfied. The third constraint ensures that, at most, one tag can be allocated to a slot. 

\section{Proposed Solution Approach}\label{Sec:PSA}
In this section, we consider an approximation algorithm for our problem and demonstrate that the proposed greedy algorithm achieves a tight approximation of \( \max_{i}~ q_{i} + 1 \) for the \textsc{MSTA} problem. This type of greedy approach has been previously studied by Aziz et al. \cite{10.5555/3463952.3463974} for multi-robot task allocation problems, and we extend this approach to our problem context. 

\paragraph{\textbf{Cost-Effective Greedy (CEG) Algorithm.}}
First, we initialize \(\mathcal{R}\) and sets \(\mathcal{Y}\) in Line No. $1$ and $2$ and in Line No. $3$ \texttt{while loop} will execute till $\mathcal{T} \neq \emptyset$. Next, in the \texttt{for} loop at lines 4 and 6, for each tag and within each tag, for each zone, we allocate tags to slots such that minimal cost is incurred. Initially, in lines 7 to 9, a subset of slots is selected for each zone according to the zone-specific influence demand associated with the advertiser's tag. In line 10, the allocation cost for the selected slots and corresponding tags is calculated. Next, in line 12, the best allocation with the minimum cost is chosen, and in line 13, we check if adding the new allocation with minimum cost, \(\mathcal{C}(t^{*}_{j}, \mathcal{S}^{*}_{j})\), does not exceed the total budget \(\mathcal{B}\). If this condition is satisfied, the allocation is added to the allocation vector \(\mathcal{Y}\), and the corresponding slots and tags are removed (lines 13 to 17). Otherwise, the \texttt{if} condition is not satisfied, or \(\mathcal{T}\) is empty, causing the \texttt{while} loop at line 3 to break. The proposed methodology has been described in the form of pseudocode in Algorithm \ref{Algo:1}.

\begin{algorithm}[h]
\scriptsize
\SetAlgoLined
\KwData{$(\mathcal{A}, \mathcal{S}, \mathcal{T}, \mathcal{V}, \mathcal{C}, \mathcal{B})$.}
\KwResult{An allocation $\mathcal{Y} = \{(t_{1}, s_{1}), \ldots, (t_{k}, s_{k})\}$ of tags to slots}
$\mathcal{R} \leftarrow \emptyset$\; 
 Initialize $\mathcal{Y} = \{\mathcal{Y}_{1}, \mathcal{Y}_{2}, \ldots \mathcal{Y}_{k}\}$\;
\While{$\mathcal{T} \neq \emptyset$}{
\For{$t_{j} \in \mathcal{T}$}{
$\mathcal{R}^{'} \leftarrow \emptyset$\;
\For {each $z_{\ell} \in a_{z_j}$}{
$\mathcal{R}=
\begin{cases}
\underset{s^{'} \subseteq \mathcal{S}_{z_\ell} : \mathcal{I}(s^{'}) \geq \sigma_{z_\ell}}{argmin} ~ ~\underset{s_{i} \in s^{'}}{\sum} \mathcal{C}_{ij}, & \text{if } \mathcal{I}(\mathcal{S}_{z_\ell}) \geq \sigma_{\ell} \\
\emptyset, & \text{otherwise}
\end{cases}$ \\
$\mathcal{R}^{'} \leftarrow \mathcal{R}^{'} \cup \mathcal{R}$\;
}
$\text{Set}~ \mathcal{S}_{j} \leftarrow \mathcal{S}_{j} \cup \mathcal{R}^{'}$\;
$\text{Set} ~ \mathcal{C}(t_{j}, \mathcal{S}_{j})=
\begin{cases}
\underset{s_{i} \in \mathcal{S}_{j}}{\sum} \mathcal{C}_{ij}, & \text{if } \mathcal{S}_{j} \neq \emptyset \\
+ \infty, & \text{otherwise}
\end{cases}$ \\
}
$\text{Set} ~ \mathcal{C}(t^{*}_{j}, \mathcal{S}^{*}_{j})= \underset{t_{j} \in \mathcal{T}}{argmin} \ \mathcal{C}(t_{j}, \mathcal{S}_{j})$\;
\If {$\underset{(t_{j}, \mathcal{S}_{j}) \in \mathcal{Y}}{\sum}~ \underset{s_{i} \in \mathcal{S}_{j}}{\sum} \mathcal{C}_{ij} + \mathcal{C}(t^{*}_{j}, \mathcal{S}^{*}_{j}) \leq \mathcal{B}$}{
$\mathcal{Y} \leftarrow \mathcal{Y} \cup \{ (t^{*}_{j}, \mathcal{S}^{*}_{j})\}$\;
$\mathcal{T} \leftarrow \mathcal{T} \setminus \{t^{*}_{j}\}$\;
$\mathcal{S} \leftarrow \mathcal{S} \setminus \{s^{*}_{j}\}$\;
}
\Else {break}
}
return $\mathcal{Y}$ \;
\caption{Cost-Effective Greedy (CEG) Algorithm for Multi-Slot Tag Assignment Problem}
\label{Algo:1}
\end{algorithm}
\vspace{-0.1in}
\paragraph{\textbf{Complexity Analysis.}}
Now, we analyze Algorithm \ref{Algo:1} for the time and space requirements. In-Line No. $1$ and $2$ initializing $\mathcal{R}$ and $\mathcal{Y}$ will take $\mathcal{O}(1)$ and $\mathcal{O}(k)$, respectively. In-Line No. $4$ \texttt{while loop} will run for $\mathcal{O}(k)$ times and in Line No. $4$ \texttt{for loop} will execute for $\mathcal{O}(k^{2})$ times. Next, in Line No., calculating cost will take $\mathcal{O}(m \cdot k)$ and in Line No. $6$ to $9$ \texttt{for loop} will take $\mathcal{O}(k^{2} \cdot \ell \cdot m \cdot t + m \cdot k^{2})$ where $\ell$, $m$, and $t$ are the number of demographic zones, number of billboard slots and number of tuples in the trajectory database to calculate influence value, respectively. Similarly, Line No. $10$ and $11$ will take $\mathcal{O}(k^{2})$ and $\mathcal{O}(k^{2} \cdot m)$ times. So, Line No. $4$ to $12$ will take $\mathcal{O}(k^{2} \cdot \ell \cdot m \cdot t + m \cdot k^{2} + k^{2} + k^{2} \cdot m)$ time to execute. In-Line No. $13$ will execute for at most $\mathcal{O}(k^{2})$ time and Line No. $14$ will execute till the budget constraint $\mathcal{B}$ and $\mathcal{T} \neq \emptyset$ satisfies. So, Line No. $14$ to $18$ will execute for $\mathcal{O}(k \cdot m)$ time in the worst case. Therefore Algorithm \ref{Algo:1} will take total $\mathcal{O}(k^{2} \cdot \ell \cdot m \cdot t + m \cdot k^{2} + k^{2} + k^{2} \cdot m + k \cdot m)$ i.e., $\mathcal{O}(k^{2} \cdot \ell \cdot m \cdot t)$. The additional space requirement for storing $\mathcal{Y}, \mathcal{R}$, will be $\mathcal{O}(k)$ and $\mathcal{O}(1)$, respectively.
\begin{theorem}\label{Th:Complexity}
Time and space requirement by Algorithm \ref{Algo:1} is of $\mathcal{O}(k^{2} \cdot \ell \cdot m \cdot t)$ and $\mathcal{O}(k)$, respectively.
\end{theorem}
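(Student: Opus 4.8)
The plan is to establish both bounds by a line-by-line accounting of Algorithm \ref{Algo:1}, summing the per-operation costs and then isolating the dominant term. First I would observe that the outer \texttt{while} loop at Line 3 executes $\mathcal{O}(k)$ times: every iteration passing the budget test at Line 13 deletes one tag from $\mathcal{T}$ at Line 15, and any iteration failing the test triggers the \texttt{break} at Line 18, so the number of while-iterations cannot exceed the number of tags $k$. Inside each while-iteration, the \texttt{for} loop at Line 4 ranges over the (at most $k$) surviving tags, contributing a second factor of $k$ and producing the $k^2$ factor common to every term in the final expression.

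Next I would bound the body of the tag loop. The inner \texttt{for} loop at Line 6 iterates over the zones $z_\ell \in a_{z_j}$ demanded by tag $t_j$, which is at most $\ell$ zones. The cost-critical operation is the selection at Line 7, where for a given zone we pick a minimum-cost subset $s' \subseteq \mathcal{S}_{z_\ell}$ meeting the influence demand $\sigma_{z_\ell}$; each such selection requires evaluating $\mathcal{I}(\cdot)$ via Equation \ref{Eq:Influence}, and a single influence evaluation scans the $t$ trajectory tuples against the $m$ slots, costing $\mathcal{O}(m \cdot t)$. Charging the influence work per zone, the cost of Lines 6--9 is $\mathcal{O}(\ell \cdot m \cdot t)$ per tag, which combined with the outer $k^2$ factor gives the leading $\mathcal{O}(k^2 \cdot \ell \cdot m \cdot t)$ term. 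I would then verify that the remaining lines are dominated: the cost aggregation at Line 10 contributes $\mathcal{O}(k^2 \cdot m)$, the \texttt{argmin} at Line 12 contributes $\mathcal{O}(k^2)$, and the budget test with the updates at Lines 13--17 contributes $\mathcal{O}(k \cdot m)$. Summing $\mathcal{O}(k^2 \cdot \ell \cdot m \cdot t + k^2 \cdot m + k^2 + k \cdot m)$ and absorbing the lower-order terms yields the claimed $\mathcal{O}(k^2 \cdot \ell \cdot m \cdot t)$.

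For the space bound, I would argue that the only persistent structure is the allocation vector $\mathcal{Y}$, which stores at most one pair $(t_j, \mathcal{S}_j)$ per tag and therefore occupies $\mathcal{O}(k)$ space, together with the scratch sets $\mathcal{R}$ and $\mathcal{R}^{'}$ that are reused within a single zone/tag iteration and require only $\mathcal{O}(1)$ auxiliary space. Since the influence computation reads the trajectory and billboard databases in place without materialising additional copies, no term exceeds $\mathcal{O}(k)$, establishing the stated space requirement.

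The main obstacle I anticipate is justifying that the subset-selection \texttt{argmin} at Line 7 can be charged as polynomial work rather than an exponential search over all subsets of $\mathcal{S}_{z_\ell}$, since a literal reading of the \texttt{argmin} ranges over exponentially many candidate subsets. I would resolve this by appealing to the intended cost-effective greedy realisation of Line 7 — slots are added one at a time in increasing cost-to-influence order until the demand $\sigma_{z_\ell}$ is met, so that the selection amortises to a bounded number of incremental influence evaluations, each costing $\mathcal{O}(m \cdot t)$. This is precisely the greedy interpretation under which the stated bound is valid, and making this charging explicit is the step that requires the most care.
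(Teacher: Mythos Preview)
Your proposal is correct and follows essentially the same line-by-line accounting as the paper: the paper also charges the \texttt{while} loop as $\mathcal{O}(k)$, the tag loop as an additional $\mathcal{O}(k)$, Lines 6--9 as $\mathcal{O}(k^{2}\cdot\ell\cdot m\cdot t)$, and the remaining lines as the same lower-order terms you list, arriving at the identical sum and the space breakdown $\mathcal{O}(k)$ for $\mathcal{Y}$ and $\mathcal{O}(1)$ for $\mathcal{R}$. Your final paragraph on the \texttt{argmin} at Line~7 actually goes beyond the paper, which simply asserts the $\mathcal{O}(m\cdot t)$ charge without addressing the nominal exponential range of the subset search; your greedy-realisation argument is the right way to close that gap.
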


\paragraph{\textbf{Description on Performance Guarantee.}}
Now, we proceed to describe the performance guarantee of the proposed solution approach. In particular, we prove Theorem \ref{TH2}.

\begin{theorem}\label{TH2}
The greedy solution in Algorithm \ref{Algo:1} for the \textsc{MSTA} problem can be implemented in polynomial time, and it can achieve the approximation of $(\mathcal{P}^{*} +1)$ where $\mathcal{P}^{*} = \underset{j \in \mathcal{T}}{Max~~ \mathcal{P}_{j}}$.
\end{theorem}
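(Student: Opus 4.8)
My plan is to establish the two assertions of Theorem~\ref{TH2} separately: the polynomial running time and the $(\mathcal{P}^{*}+1)$ approximation ratio. For the running time I would simply invoke Theorem~\ref{Th:Complexity}: the \texttt{while} loop of Algorithm~\ref{Algo:1} removes at least one tag from $\mathcal{T}$ on every iteration in which it does not \texttt{break}, so it runs at most $k$ times, and each iteration performs only the work of recomputing, for every remaining tag and every demanded zone, a minimum-cost slot bundle satisfying the demand $\sigma_{z_{\ell}}$ and then taking an $\arg\min$ over tags. Since this is bounded by $\mathcal{O}(k^{2}\cdot \ell \cdot m \cdot t)$ as accounted for in Theorem~\ref{Th:Complexity}, polynomiality is immediate, and the substance of the theorem lies entirely in the ratio.

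For the approximation guarantee, let $\mathcal{G}$ denote the set of tags handled by the greedy algorithm and $\mathcal{O}$ an optimal set of handled tags; the goal is to prove $|\mathcal{O}| \le (\mathcal{P}^{*}+1)\,|\mathcal{G}|$. The heart of the argument is a charging scheme in which every tag of $\mathcal{O}$ is charged to a tag of $\mathcal{G}$ so that no greedy tag receives more than $\mathcal{P}^{*}+1$ charges. I would charge each $o \in \mathcal{O}\cap\mathcal{G}$ to itself, contributing at most one such charge per greedy tag, and charge each $o \in \mathcal{O}\setminus\mathcal{G}$ to a greedy tag with which it competes for a slot. The exclusivity constraint~(3) of $\mathcal{IP}$ is what makes this work: in the optimal assignment each slot serves at most one tag, so if a greedy tag $g$ occupies the slot set $p_{g}$ with $|p_{g}|=\mathcal{P}_{g}\le \mathcal{P}^{*}$, then at most $\mathcal{P}_{g}$ distinct optimal tags can lose one of their assigned slots to $g$. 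Summing the two kinds of charges gives at most $\mathcal{P}^{*}+1$ charges per greedy tag, hence the claimed bound.

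The step that makes the charging valid, and which I expect to be the main obstacle, is showing that every $o \in \mathcal{O}\setminus\mathcal{G}$ really does collide with some greedy tag, i.e.\ that the greedy solution is \emph{maximal}. Following Aziz et al.~\cite{10.5555/3463952.3463974}, I would argue by contradiction: if some optimal tag $o$ were handled using only slots still free at termination, then handling $o$ was feasible for the algorithm, and because the per-step handling costs are non-decreasing (deleting slots can only raise the minimum-cost bundle of a remaining tag, and the globally cheapest tag is chosen each step), $o$ would have been selected before the loop stopped. The delicate case is the \texttt{break} triggered by the budget test rather than by $\mathcal{T}=\emptyset$: here I must combine the termination inequality $\sum_{(t_{j},\mathcal{S}_{j})\in\mathcal{Y}}\sum_{s_{i}\in\mathcal{S}_{j}} \mathcal{C}_{ij} + \mathcal{C}(t^{*}_{j},\mathcal{S}^{*}_{j}) > \mathcal{B}$ with the non-decreasing order of the selected costs to argue that an optimal tag left unselected purely for budget reasons is absorbed into the single ``$+1$'' term, exactly as in the multi-robot allocation analysis. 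Getting this budget-versus-slot-exhaustion distinction to fold cleanly into the charging is the crux of the proof.

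Finally, to justify that the bound is tight I would exhibit a family of instances on which the greedy ratio approaches $\mathcal{P}^{*}+1$: a single cheap tag whose minimum-cost bundle occupies $\mathcal{P}^{*}$ low-cost slots is preferred by the algorithm and thereby blocks the $\mathcal{P}^{*}+1$ optimal tags that reuse exactly those slots, so greedy handles one tag while the optimum handles $\mathcal{P}^{*}+1$. This matches the charging bound and shows the factor $\mathcal{P}^{*}+1$ cannot be improved for Algorithm~\ref{Algo:1}.
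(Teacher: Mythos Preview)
Your high-level plan matches the paper: both arguments follow Aziz et al.\ and charge optimal tags to greedy tags using the slot-exclusivity constraint, yielding at most $\mathcal{P}^{*}$ ``collision'' charges per greedy tag. The polynomial-time claim is fine, and your collision count via constraint~(3) is exactly what the paper uses.

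The gap is in how you account for the optimal tags that \emph{neither} coincide with a greedy tag \emph{nor} share a slot with one. You try to dispose of them by maximality (``every $o\in\mathcal{O}\setminus\mathcal{G}$ really does collide'') and, when the algorithm stops on the budget test, by saying such tags are ``absorbed into the single $+1$ term''. Neither works. Maximality is simply false once the \texttt{break} is budget-triggered: there can be many optimal tags whose bundles are entirely disjoint from the greedy slots, and the $+1$ in the bound $|\mathcal{Q}_{j}|\le \mathcal{P}^{*}+1$ is already spent on the greedy tag $t_{j}$ itself appearing in $\mathcal{Q}$; it cannot absorb an unbounded set of non-colliding optimal tags. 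The paper does \emph{not} try to charge these tags at all. Instead it partitions the greedy output into $\mathcal{Y}_{1}$ (tags that impede some optimal tag) and $\mathcal{Y}_{2}=\mathcal{Y}\setminus\mathcal{Y}_{1}$, and proves the separate inequality $|\mathcal{Y}_{2}|\ge|\mathcal{Q}'|$, where $\mathcal{Q}'$ is the set of un-impeded optimal tags. The key ingredient you are missing is the per-step cost comparison: for every $(t_{j},\mathcal{S}_{j})\in\mathcal{Y}_{1}$ and every $(t_{j'},\mathcal{S}'_{j'})\in\mathcal{Q}_{j}$, one has $\sum_{y\in\mathcal{S}_{j}}\mathcal{C}_{yj}\le\sum_{y\in\mathcal{S}'_{j'}}\mathcal{C}_{yj'}$ because greedy always picks a cheapest still-feasible pair. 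Summing gives that the budget consumed by $\mathcal{Y}_{1}$ is at most the budget of the impeded optimal tags, so the budget left to $\mathcal{Y}_{2}$ is at least the cost of $\mathcal{Q}'$; since the tags of $\mathcal{Q}'$ remain feasible throughout, greedy must handle at least $|\mathcal{Q}'|$ further tags before stopping. Combining $(\mathcal{P}^{*}+1)|\mathcal{Y}_{1}|\ge|\bigcup_{j}\mathcal{Q}_{j}|$ with $|\mathcal{Y}_{2}|\ge|\mathcal{Q}'|$ yields $k\ge \ell/(\mathcal{P}^{*}+1)$. Your ``non-decreasing costs'' observation is in the right direction but needs to be turned into this explicit cost inequality and the $\mathcal{Y}_{1}/\mathcal{Y}_{2}$ split.

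A smaller point: your tightness construction cannot achieve ratio $\mathcal{P}^{*}+1$. A greedy tag occupying $\mathcal{P}^{*}$ slots blocks at most $\mathcal{P}^{*}$ optimal tags by constraint~(3), so greedy $=1$ versus optimal $=\mathcal{P}^{*}+1$ is infeasible as described. In any case the theorem as stated does not assert tightness, so you need not include such an example.
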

\begin{proof}
First, we show that the proposed greedy approach in Algorithm \ref{Algo:1} can be implemented in polynomial time. The \texttt{while loop} at Line No. $3$ can be repeated at most $m$ times, and in each iteration, the value of $|\mathcal{T}|$ can be decreased by $1$. In each round of \texttt{while loop} for each tag, two \texttt{for loop} can be run for the polynomial time. This happens because to find the minimum cost for allocating each tag to slots only needs the minimum $p_{i}$ costs between tags and all the slots. Next, we prove the approximation ratio and assume the solution of the greedy algorithm is $\mathcal{Y} = \{(t_{1}, \mathcal{S}_{1}), \ldots, (t_{k}, \mathcal{S}_{k})\}$, where $k$ is the number of tags that are handled. For each $(t_{j}, \mathcal{S}_{j}) \in \mathcal{T} \times 2^{\mathcal{S}}$ denotes tag $t_{j}$ is handled by the slots in $\mathcal{S}$. Let there be an optimal solution $\mathcal{Q} =  \{(t_{j1}, \mathcal{S}^{'}_{j1}), \ldots, (t_{j\ell}, \mathcal{S}^{'}_{j\ell})\}$ exists. Now, we have to prove that $k \geq \frac{\ell}{\mathcal{P}^{*} + 1}$.

\par To find out the difference between $\mathcal{Y}$ and $\mathcal{Q}$, we observe that for each $1 \leq j \leq k$, $|\mathcal{S}_{j}| = p_{j} \leq \mathcal{P}^{*}$ and tag $t_{j}$ can be  impede at most $\mathcal{P}^{*} + 1$ tags in $\mathcal{Q}$ including $t_{j}$ itself. Now, for each $t_{j} \in \mathcal{Y}$, we recursively define $\mathcal{Q}_{j} \subseteq \mathcal{Q} \setminus \underset{e \leq j-1}{\bigcup ~ \mathcal{Q}_{e}}$ as the tags are not handled because of $t_{j}$ i.e., $\mathcal{Q} = \{ (t_{j}, \mathcal{S}^{'}_{j}) \in \mathcal{Q} \setminus \underset{e \leq j-1}{\bigcup ~ \mathcal{Q}_{e}}~ ~ |~ \mathcal{S}^{'}_{j} \cap \mathcal{S}_{j} \neq  \emptyset\}$. It is also clear that in $\mathcal{Q}_{j}$ cannot be impeded by $\{t_{1}, t_{2}, \ldots, t_{j-1}\}$ and moreover, $|\mathcal{Q}_{j}| \leq \mathcal{P}^{*} +1$. We further partition our greedy solution into $\mathcal{Y}_{1}$ and $\mathcal{Y}_{2}$, where $\mathcal{Y}_{1}$ is the set of tags whose selection impedes tags in $\mathcal{Q}$ and it denotes $\mathcal{Y}_{1} = \{(t_{j}, \mathcal{S}_{j}) \in \mathcal{Y}~ ~ | ~ ~ |\mathcal{Q}_{j}| \geq 1\}$. Similar way $\mathcal{Y}_{2} = \mathcal{Y} \setminus \mathcal{Y}_{1}$ and we can claim that, $(\mathcal{P}^{*} +1) \cdot |\mathcal{Y}_{1}| \geq |\underset{(t_{j}, \mathcal{S}_{j}) \in \mathcal{Y}_{1}}{\bigcup ~ ~ \mathcal{Q}_{j}}|$ and this is holds because $|\mathcal{Q}_{j}| \leq \mathcal{P}^{*} + 1$ for any $j$. Accordingly, $\underset{y \in \mathcal{S}_{j}}{\sum ~ ~ \mathcal{C}_{y_j}} \leq \underset{y \in \mathcal{S}^{'}_{j^{'}}}{\sum ~ ~ \mathcal{C}_{y_j^{'}}}$ for each $(t_{j}, \mathcal{S}_{j}) \in \mathcal{Y}_{1}$ and any $(t_{j^{'}}, \mathcal{S}^{'}_{j^{'}}) \in \mathcal{Q}_{j}$ and this inequality holds because Algorithm \ref{Algo:1} always selects a feasible tag slots pair considering minimum allocation costs. There will be contradictions if $\underset{y \in \mathcal{S}_{j}}{\sum ~ ~ \mathcal{C}_{y_j}} > \underset{y \in \mathcal{S}^{'}_{j^{'}}}{\sum ~ ~ \mathcal{C}_{y_j^{'}}}$, since $(t_{j^{'}}, \mathcal{S}^{'}_{j^{'}})$ is not impeded by $\{t_{1}, t_{2}, \ldots t_{j-1}\}$ then $(t_{j^{'}}, \mathcal{S}^{'}_{j^{'}})$ should be added to $\mathcal{Y}$ before $(t_{j}, \mathcal{S}_{j})$. Now, assume $\mathcal{Q}^{'} = \mathcal{Q} \setminus \underset{(t_{j}, \mathcal{S}^{'}_{j}) \in \mathcal{Y}_{1}}{\mathcal{Q}_{j}}$ and we have to show that $|\mathcal{Y}_{2}| \geq |\mathcal{Q}^{'}|$. Next, using the definition of $\mathcal{Y}_{1}$ we can write,

\begin{equation}
    \underset{(t_{j}, \mathcal{S}_{j}) \in \mathcal{Y}_{1}}{\sum} ~\underset{ y \in \mathcal{S}_{j}}{\sum} \mathcal{C}_{y_j} \leq  \underset{(t_{j}, \mathcal{S}_{j}) \in \mathcal{Y}_{1}}{\sum} ~ ~\underset{(j^{'}, \mathcal{S}^{'}_{j^{'}}) \in \mathcal{Q}_{j}}{\sum} ~ ~\underset{ y \in \mathcal{S}^{'}_{j^{'}}}{\sum} \mathcal{C}_{y_{j^{'}}}
\end{equation}

which means the budget assigned for $\mathcal{Y}_{2}$ is sufficient for $\mathcal{Q}^{'}$. Thus we can say $|\mathcal{Y}_{2}| \geq |\mathcal{Q}^{'}|$ and we can write, 

\begin{equation}
    k = |\mathcal{Y}_{1}| + |\mathcal{Y}_{2}| \geq \frac{|\underset{(t_{j}, \mathcal{S}_{j}) \in \mathcal{Y}_{1}}{\bigcup ~ ~ \mathcal{Q}_{j}}|}{(\mathcal{P}^{*} +1)} + |\mathcal{Q}^{'}| \geq \frac{\ell}{\mathcal{P}^{*} + 1}
\end{equation}
which completes the proof of Theorem \ref{TH2}.
\end{proof}

\section{Experimental Evaluation}\label{Sec:Experimental_Evaluation}

\paragraph{\textbf{Dataset Description}}
We choose two different real-life datasets, New York City (NYC)\footnote{\url{https://www.nyc.gov/site/tlc/about/tlc-trip-record-data.page}} and Los Angeles (LA)\footnote{\url{https://github.com/Ibtihal-Alablani}}. These datasets are also used in the existing literature \cite{ali2022influential,ali2023influential,ali2024influential,ali2024regret}. The NYC dataset includes 227,428 check-ins from April 12, 2012, to February 16, 2013, while the LA dataset comprises 74,170 check-ins from 15 different streets in Los Angeles. Additionally, we collected billboard data from various locations in NYC and LA through LAMAR\footnote{\url{http://www..lamar.com/InventoryBrowser}}, a major billboard provider. The NYC billboard dataset contains 1,031,040 slots, and the LA dataset contains 2,135,520 slots. Further, we divide NYC billboard datasets into five geographic regions based on latitude and longitude: Bronx, Brooklyn, Manhattan, Queens, Staten Island, and LA, with three zones based on 15 streets.

\paragraph{\textbf{Billboard Cost.}} Outdoor advertising companies, like LAMAR, typically do not disclose the exact cost of renting a billboard slot. As reported in existing studies \cite{ali2022influential,ali2024effective,zhang2019optimizing,zhang2020towards}, the cost of a slot is assumed to be proportional to its influence. So, we adopt a similar approach, defining the cost of a billboard slot \(bs\) as $\text{Cost}(bs) = \left\lfloor \beta \times \frac{\mathcal{I}(bs)}{10} \right\rfloor$. Here, \(\beta\) is a factor randomly chosen between 0.8 and 1.1 to account for cost fluctuations, and \(\mathcal{I}(bs)\) represents the influence of slot $bs$.

\paragraph{\textbf{Advertiser.}} Given an advertiser $\mathcal{A}$, with a set of tags $\mathcal{T} = \{t_1, t_2, \ldots, t_k\}$, $\mathcal{A}$ submits a campaign proposal to an influence provider to deploy tags the billboard slots under a budget $\mathcal{B}$. The advertiser can be represented as $(\mathcal{A}, t_{i}, \sigma_{i}, \mathcal{B})$, where $t_{i}$ is the $i^{th}$ tag and $\sigma_{i}$ is the influence demand of the corresponding tag.

\subsection{Experimental Setup}
We have considered the following experimental setup for our experiments.
\paragraph{\textbf{Key Parameters.}}
All the parameters are summarized in Table \ref{Key-parameters}. We vary the number of tags from $5$ to $100$. We vary $\lambda$ from $25m$ to $150m$, and $\lambda$ denotes the distance in which billboard slots can influence trajectories. We vary one parameter in each experiment and set the remaining parameters in their default settings (highlighted in bold).
\begin{table}[h!]
\caption{\label{Key-parameters} Key Parameters}
\vspace{-0.15 in}
\begin{center}
    \begin{tabular}{ | p{2cm}| p{5.5cm}|}
    \hline
    Parameter & Values  \\ \hline
    $\theta$ & $40\%, 60\%, 80\%, \textbf{100\%}, 120\%$   \\ \hline
    $\delta$ & $1\%, 2\%, \textbf{5\%}, 10\%, 20\%$   \\ \hline
    $\mathcal{|T|}$ & $5, 10, 20, 50, 100$   \\ \hline
    $\lambda$ & $25m,50m,\textbf{100m},125m,150m$  \\ \hline
    \end{tabular}
\end{center}
\end{table}
\paragraph{\textbf{Demand Supply Ratio $\theta$.}}
It is the proportion of the global influence demand over the influence provider influence supply and it can be defined as $\theta = \sigma^{\mathcal{T}} / \sigma^{*}$, where $\sigma^{\mathcal{T}} = \sum_{i=1}^{k} \sigma_{i}$ represents global demand and $\sigma^{*} = \sum_{b \in \mathcal{BS}} \mathcal{I}(b)$ is the influence provider supply. We simulate five different values of $\theta$ as $40\%,$ $60\%,$ $80\%,$ $100\%,$ $120\%$. 

\paragraph{\textbf{Average Individual Demand Ratio $\delta$.}} It is the ratio of average individual demand over the influence provider influence supply, i.e., $\delta = \sigma^{\mathcal{T}^{''}} / \sigma^{*}$, where $\sigma^{\mathcal{T}^{''}} = \sigma^{\mathcal{T}} / |\mathcal{T}|$ is the average individual influence demand of all the tags. This $\delta$ value adjusted the individual tag's demand.

\paragraph{\textbf{Tags Demand $\sigma$.}} We generate the demand of each tags based on $\sigma = \lfloor \omega \cdot \sigma^{*} \cdot \delta \rfloor$, where $\omega$ is a factor randomly chosen between $0.8$ to $1.2$ to simulate various payment of tag.

\paragraph{\textbf{Advertiser Budget} $\mathcal{B}$.}
We follow a widely adopted payment setting in the existing studies \cite{aslay2017revenue,aslay2015viral,zhang2021minimizing,ali2024minimizing,ali2024regret} and set each tag payment as proportional to its influence demand, i.e., $\mathcal{L}_{i} = \lfloor \alpha \cdot \sigma_{i} \rfloor$, where $\alpha$ is a factor randomly chosen from $0.9$ to $1.1$ to simulate different payment. So, the budget of the advertiser can be defined as $\mathcal{B} = \sum_{i=1}^{k} \mathcal{L}_{i}$.
\paragraph{\textbf{Environment Setup.}}
All the Python codes are executed in HP Z4 workstations with 64 GB RAM and an Xeon(R) 3.50 GHz processor.  

\subsection{\textbf{Baseline Methods}}
\paragraph{\textbf{Random Allocation.}}
In this approach, billboard slots are selected uniformly at random to allocate tags to them till the zone-specific influence requirement and the budget constraint are satisfied.
\paragraph{\textbf{Top-$k$ Allocation.}}
In this approach, billboard slots are sorted based on their influence value. Next, sorted slots are selected to allocate tags individually until the zonal influence requirements and budget constraints are satisfied.

\subsection{\textbf{Goals of our Experiments}} \label{Sec:Research_Questions}
In this study, we want to address the following Research Questions (RQ).
\begin{itemize}
\item \textbf{RQ1}: Varying $\theta$, $\delta$, how do the number of handled tags change?
\item \textbf{RQ2}: Varying $\theta$, $\delta$, how do the total budget is utilized?
\item \textbf{RQ3}: Varying $\theta$, $\delta$, how do the computational time change?
\item \textbf{RQ4}: Varying $\lambda$, how do the influence quality change?
\end{itemize}

\subsection{\textbf{Observations with Explanation}}
We first evaluate how varying $\theta$ and $\delta$ impact the allocation of tags to billboard slots. As stated previously, we vary $\theta$ from $40\%$ to $120\%$ and $\delta$ from $1\%$ to $20\%$ to answer the research questions (RQ) that arise. In the following section, we will present the results of the LA and NYC datasets under four different cases to show the effectiveness of the study. 
\paragraph{\textbf{ Varying $\theta$, $\delta$ Vs. Tags.}}
Corresponding to case $1$, we have $\theta \leq 80\%$ and $\delta \leq 2\%$, and this refers to both global and individual demand of all the tags being low, i.e., influence provider has a large number of tags with small individual demand. As $\theta$ is small, most of the tag's zone-specific demand is satisfied, and tags are allotted to the slots, shown in Figure \ref{Fig:ADV} (a,b) of LA and Figure \ref{Fig:ADV} (e,f) of NYC. In case $2$, we have $\theta \leq 80\%$ and $\delta \geq 5\%$, and it refers to global demand being lower than the influence supply and the influence provider having tags whose individual influence demand is higher. The influence provider has to deploy more slots to allocate tags, and the number of satisfied tags compared to the case $1$ will be less, as shown in Figure \ref{Fig:ADV} (c,d) of LA and Figure \ref{Fig:ADV} (g,h) of NYC. In case $3$, we have $\theta \geq 100\%$ and $\delta \leq 2\%$ represents a situation where global demand is very high and individual demand is low. The influence provider has a large tags base formed by small tags with low influence demand. For the given $\theta = 100\%$, none of the algorithms can satisfy all the tags shown in Figure \ref{Fig:ADV} (a,b) of LA and Figure \ref{Fig:ADV} (e,f) of NYC. In case $4$, we have $\theta \geq 100\%$ and $\delta \geq 5\%$ represents a situation where none of the algorithms can satisfy all the tags as both individual and global influence demand are high and the number of satisfied tags will be very less as shown in Figure \ref{Fig:ADV} (c,d) of LA and Figure \ref{Fig:ADV} (g,h) of NYC. One point to be highlighted in the NYC datasets is that, in most cases, fewer tags are handled due to unsatisfied zonal influence demand.

\begin{figure*}[!ht]
\centering
\begin{tabular}{cccc}
\includegraphics[scale=0.125]{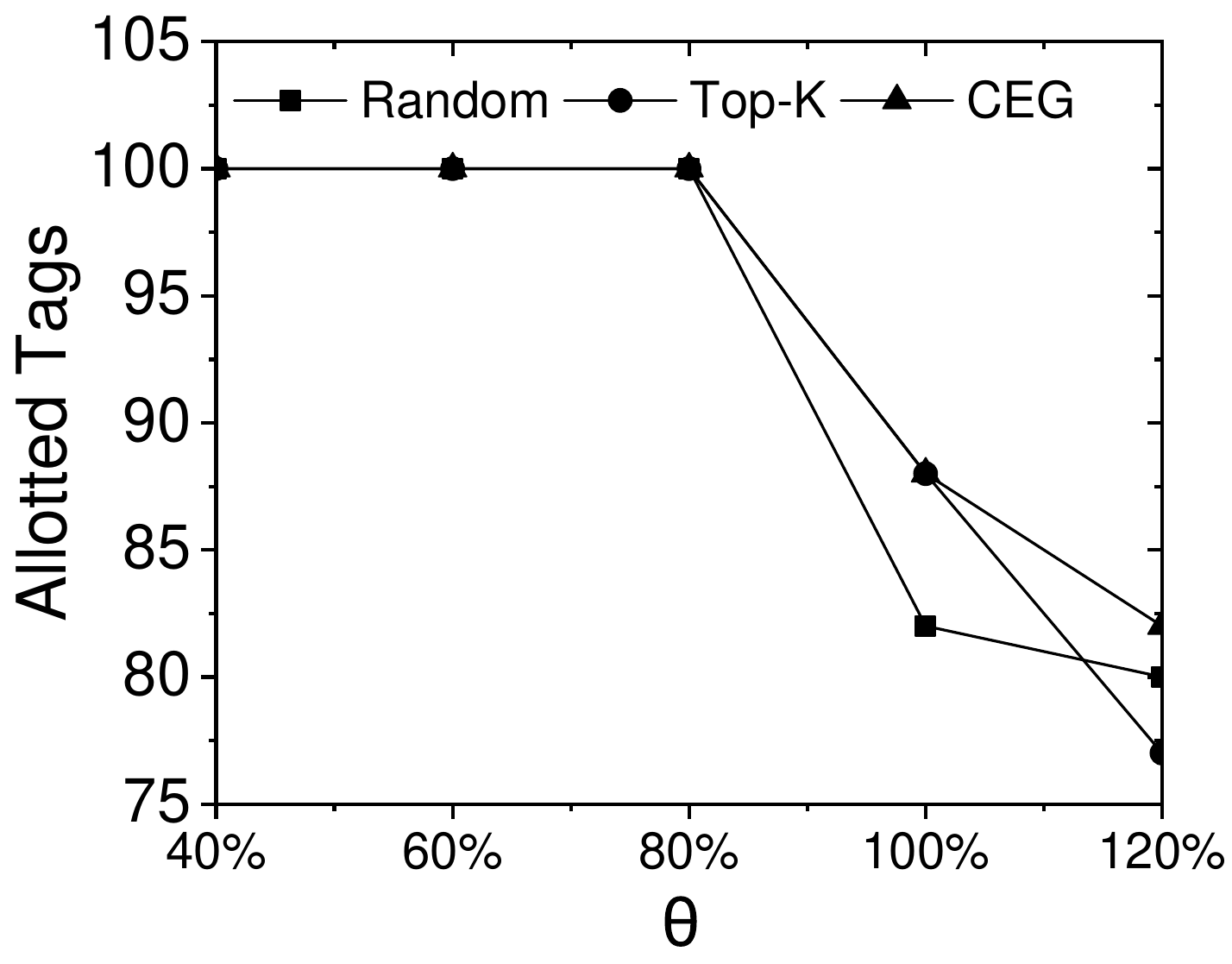} & \includegraphics[scale=0.125]{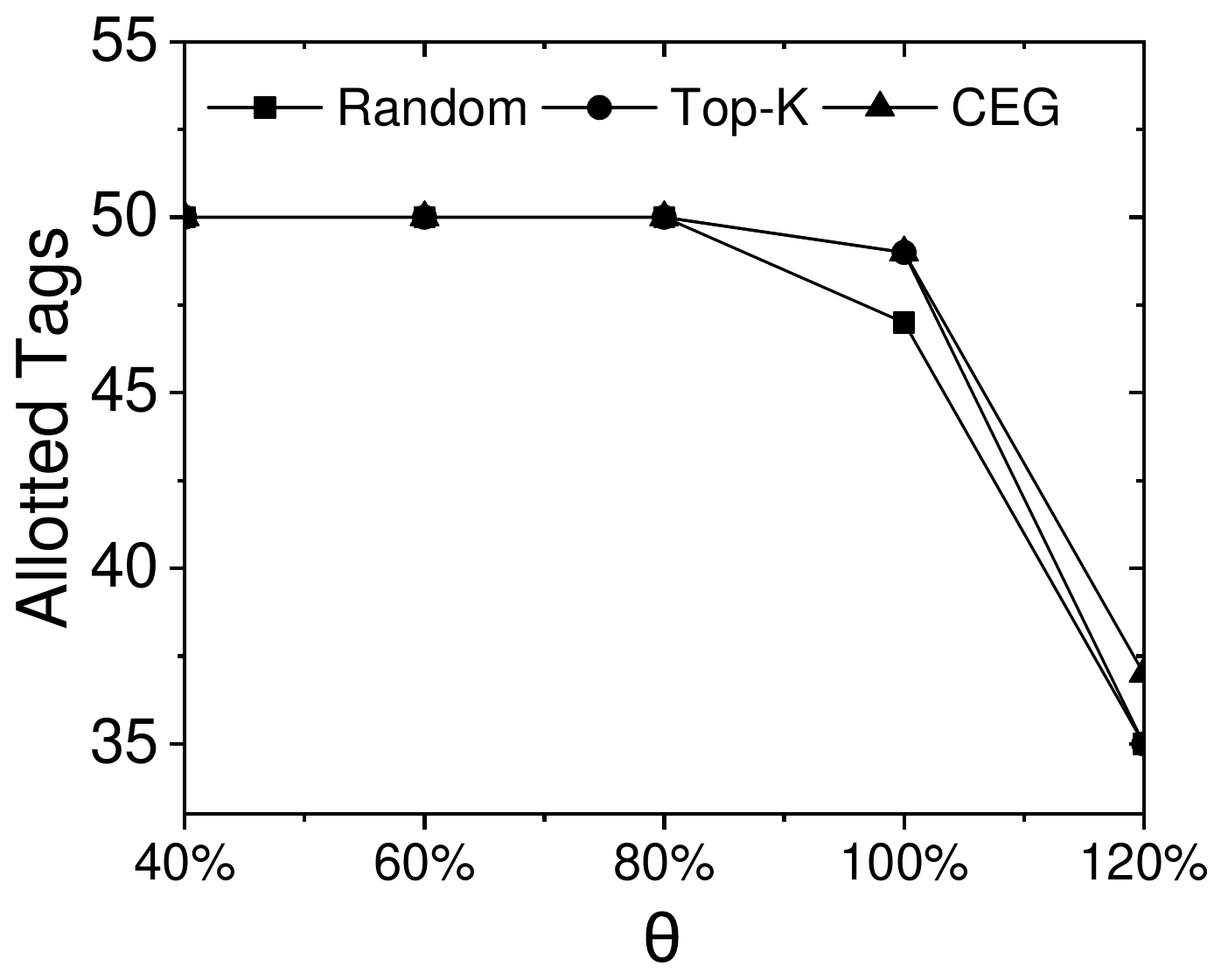} & \includegraphics[scale=0.125]{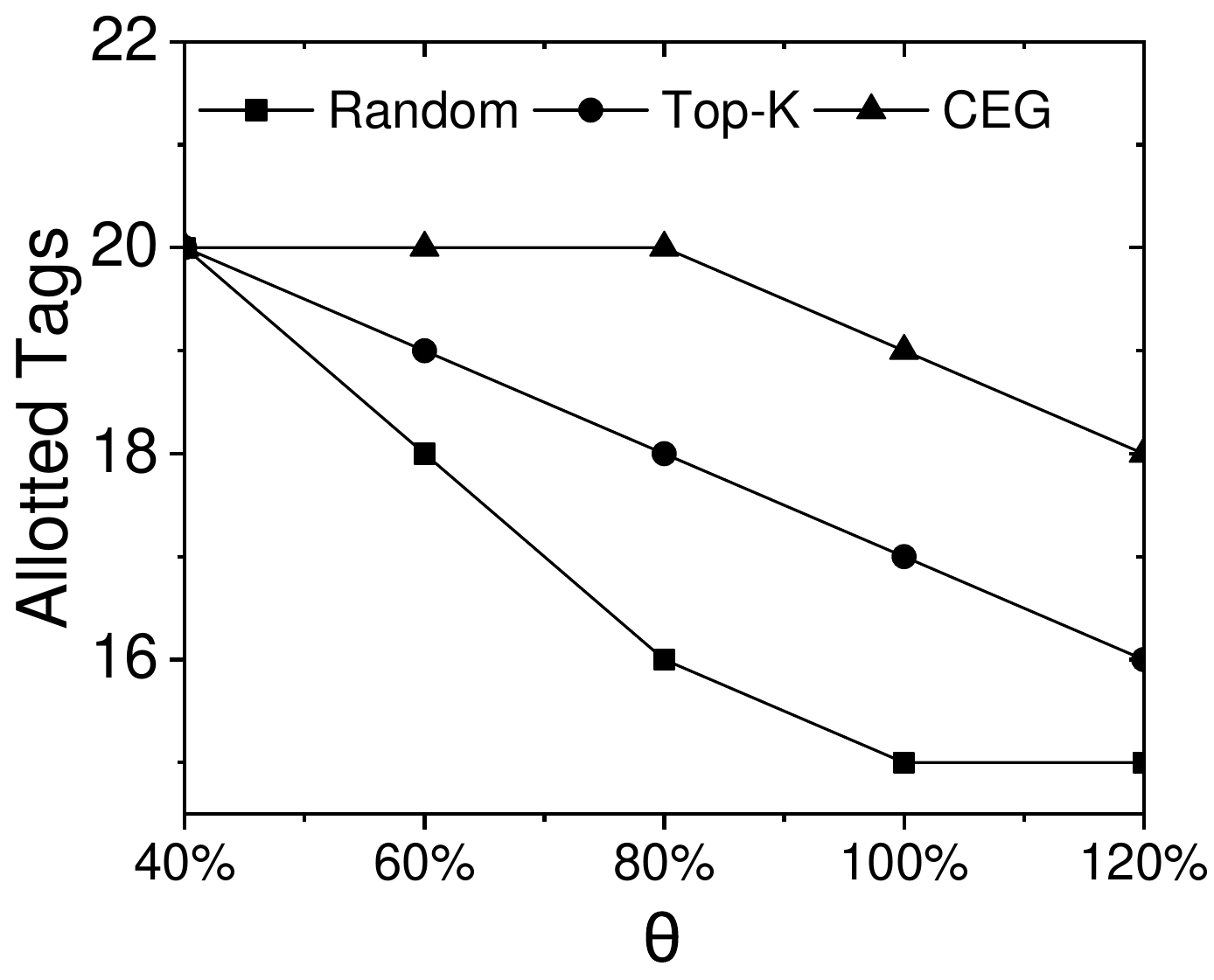} & \includegraphics[scale=0.125]{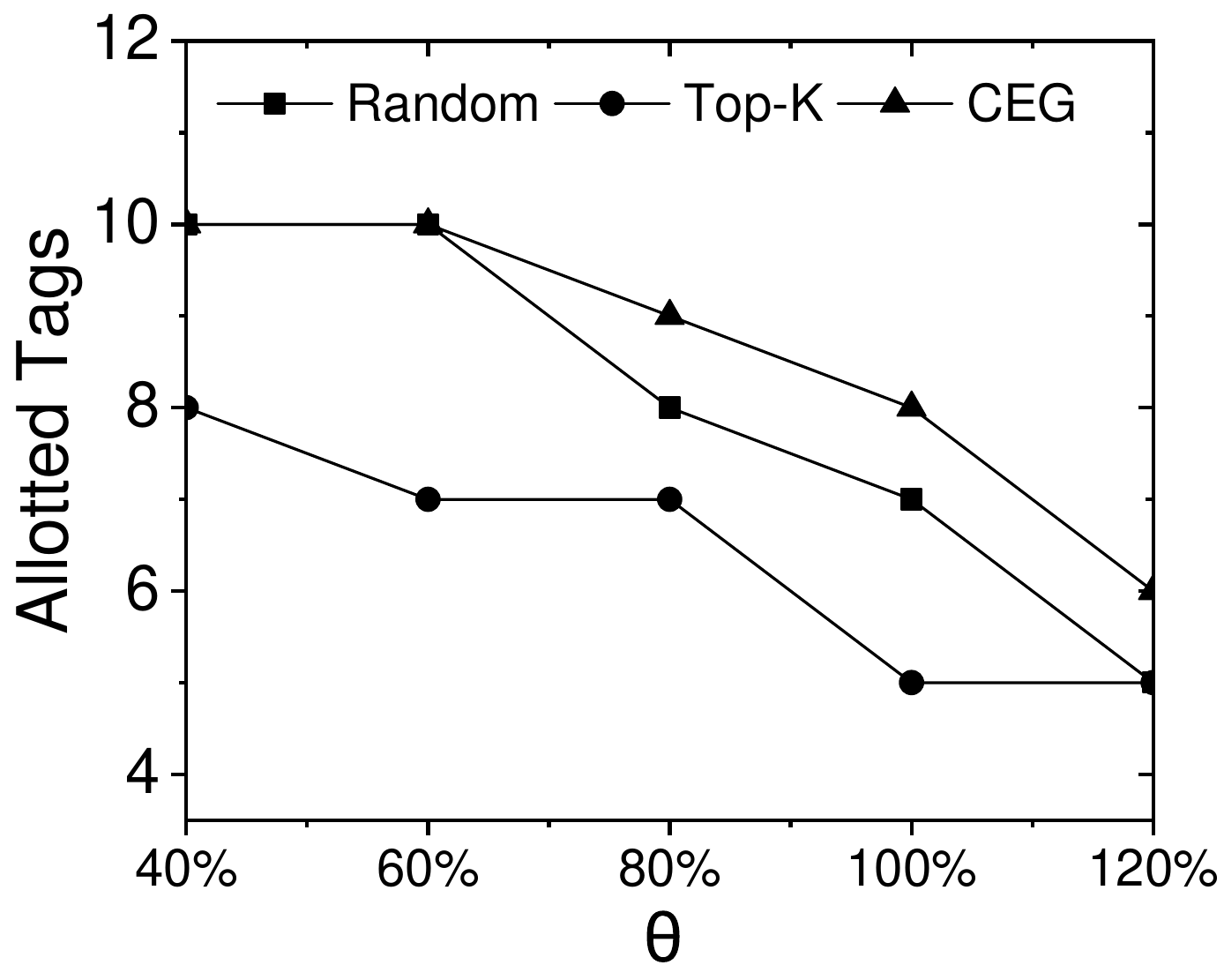}  \\
{\tiny (a) $\delta = 1\%, |\mathcal{T}| = 100$} &
{\tiny (b) $\delta = 2\%, |\mathcal{T}| = 50$} &
{\tiny (c) $\delta = 5\%, |\mathcal{T}| = 20$} &
{\tiny (d) $\delta = 10\%, |\mathcal{T}| = 10$} \\[5pt]
\includegraphics[scale=0.125]{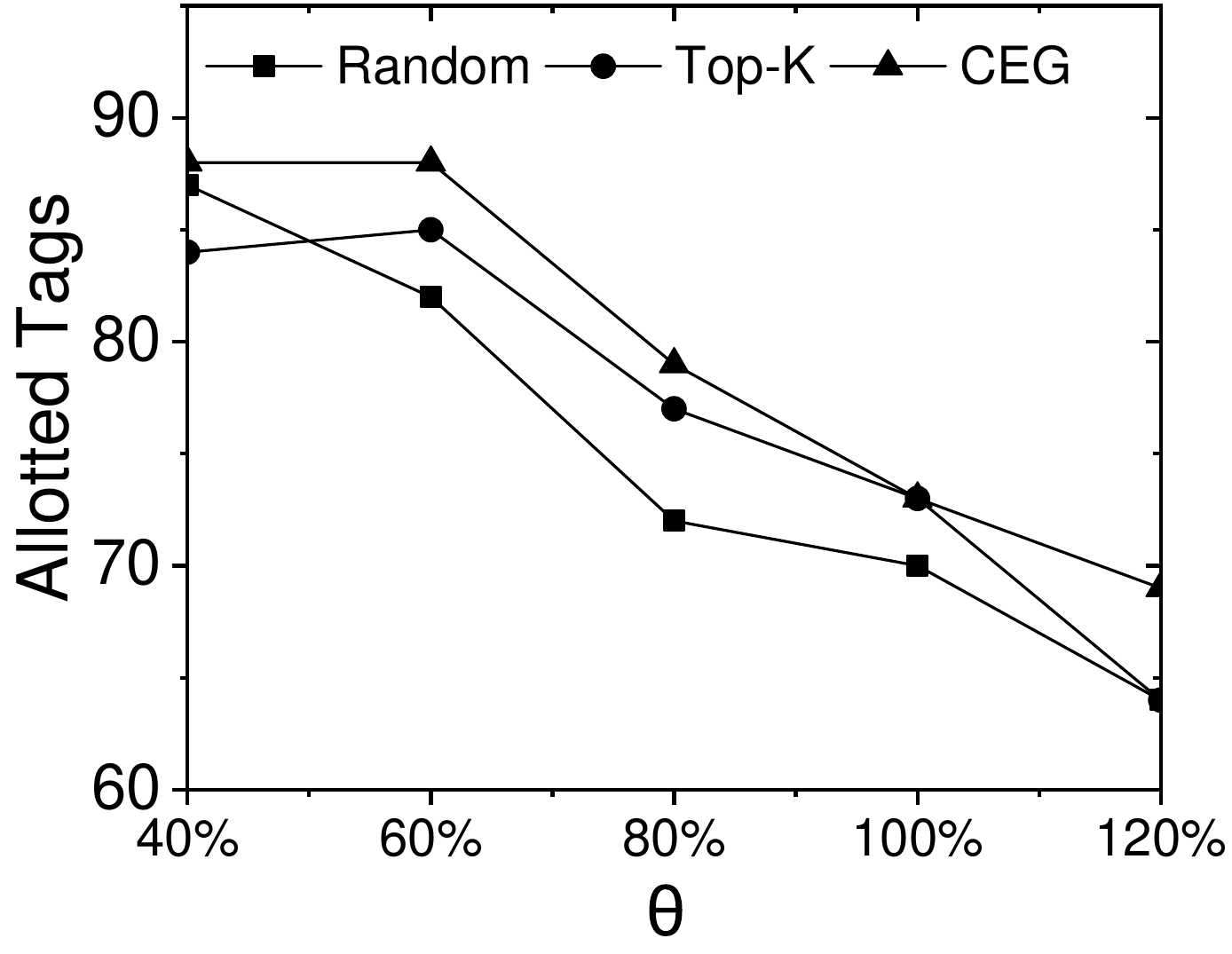} & \includegraphics[scale=0.13]{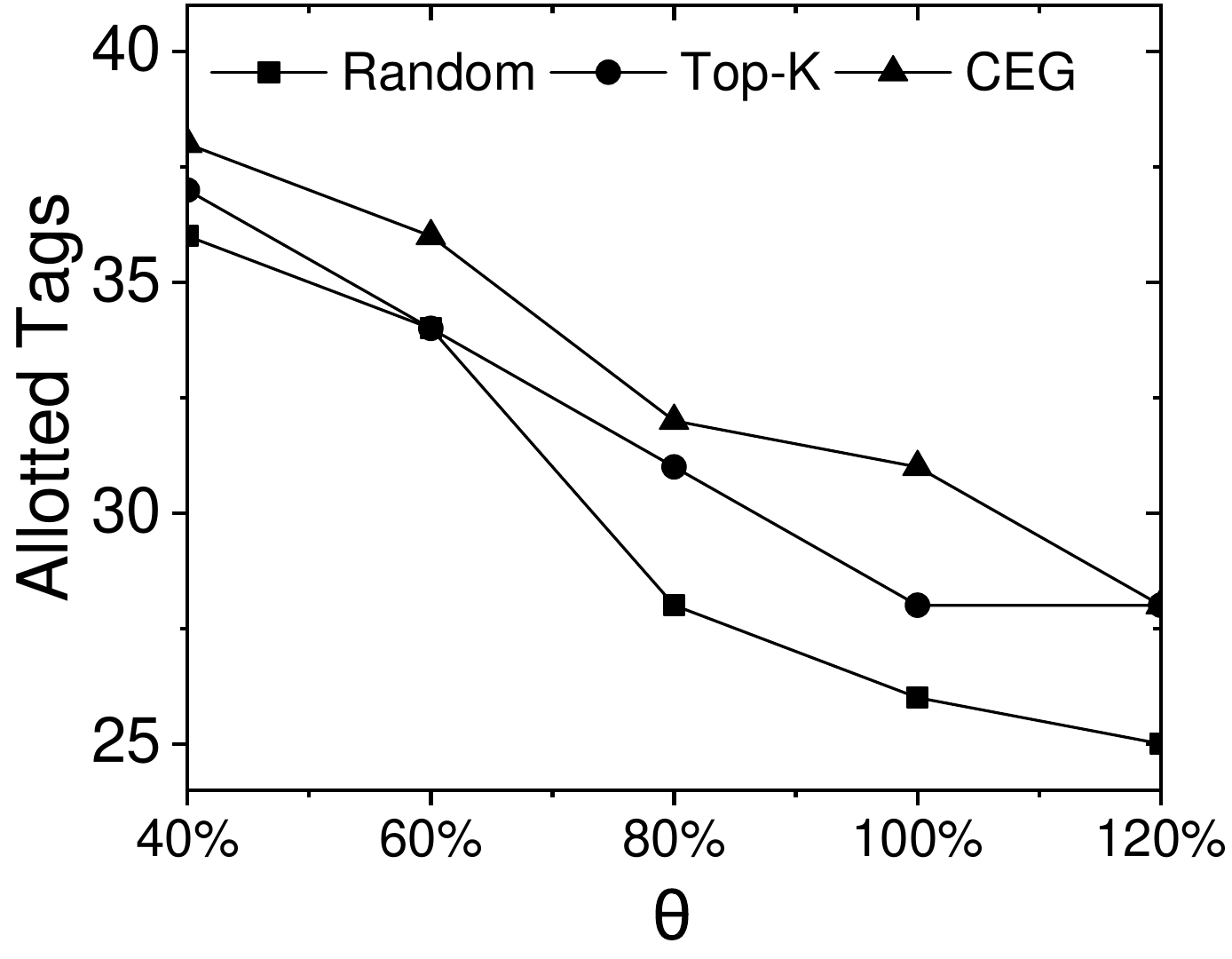} & \includegraphics[scale=0.125]{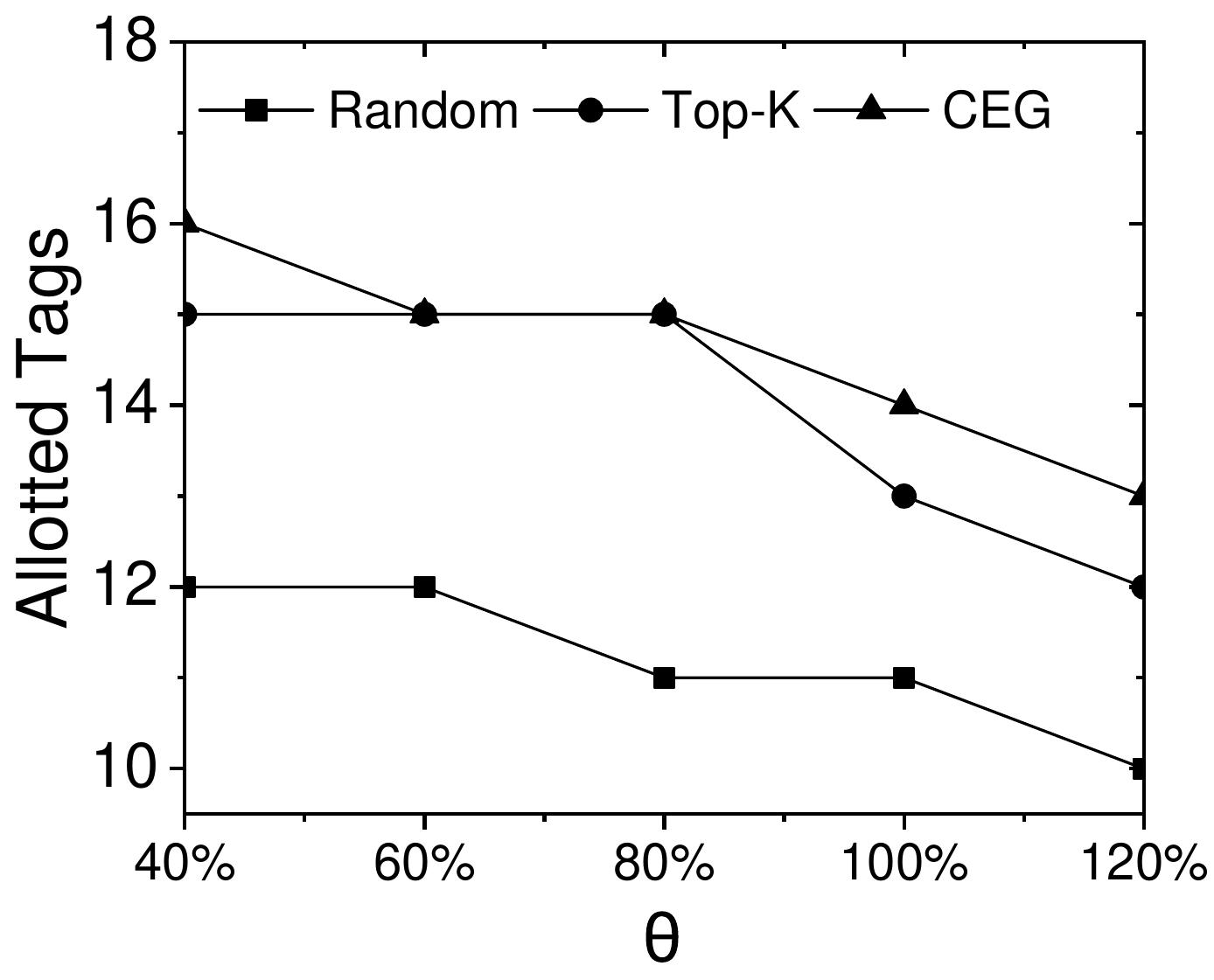} & \includegraphics[scale=0.125]{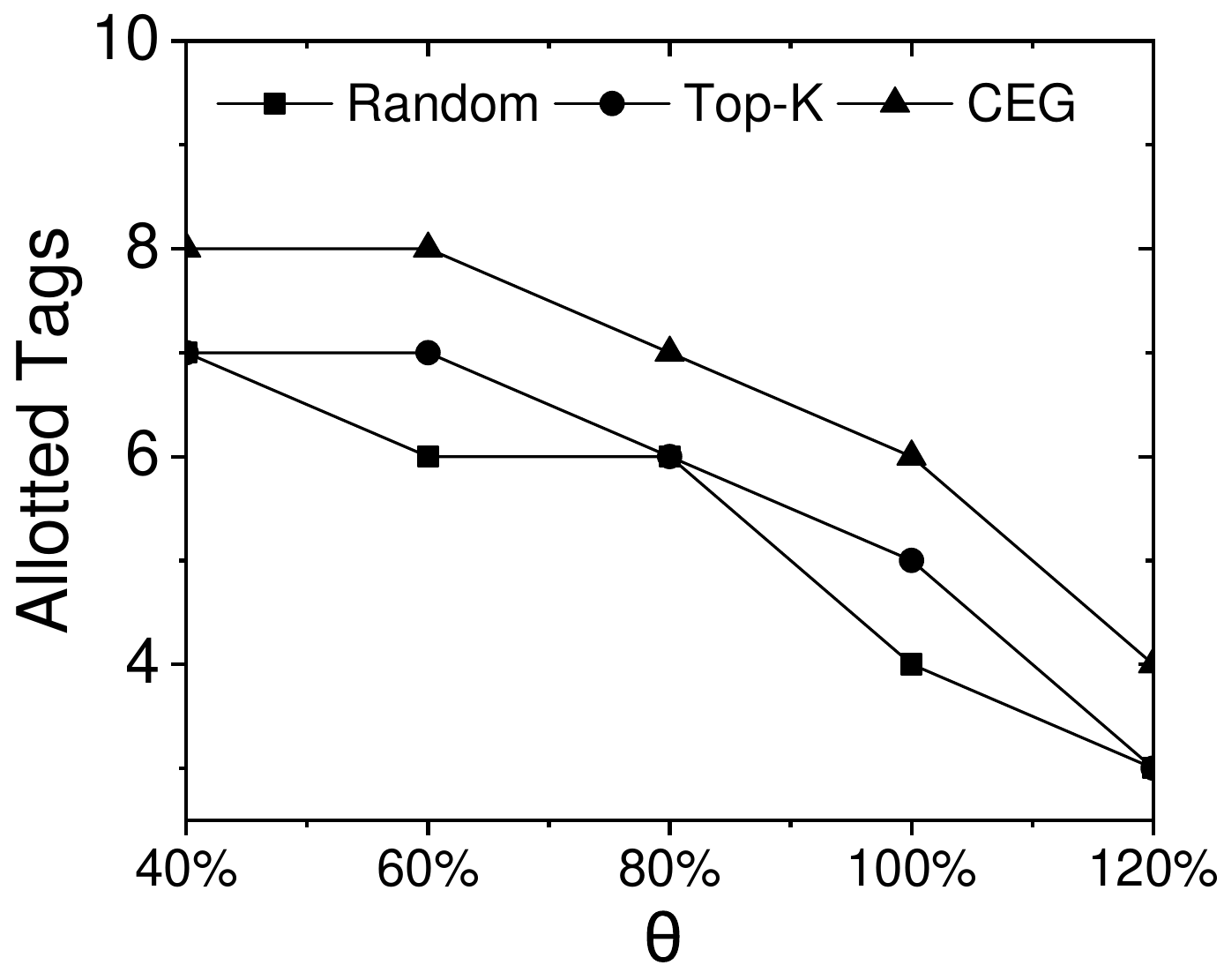}  \\
{\tiny (e) $\delta = 1\%, |\mathcal{T}| = 100$} &
{\tiny (f) $\delta = 2\%, |\mathcal{T}| = 50$} &
{\tiny (g) $\delta = 5\%, |\mathcal{T}| = 20$} &
{\tiny (h) $\delta = 10\%, |\mathcal{T}| = 10$} \\[5pt]
\end{tabular}
\caption{Allotted tags on varying $\theta$ in LA $(a,b,c,d)$, NYC $(e,f,g,h)$}
\label{Fig:ADV}
\end{figure*}


\begin{figure*}[!ht]
\centering
\begin{tabular}{cccc}
\includegraphics[scale=0.125]{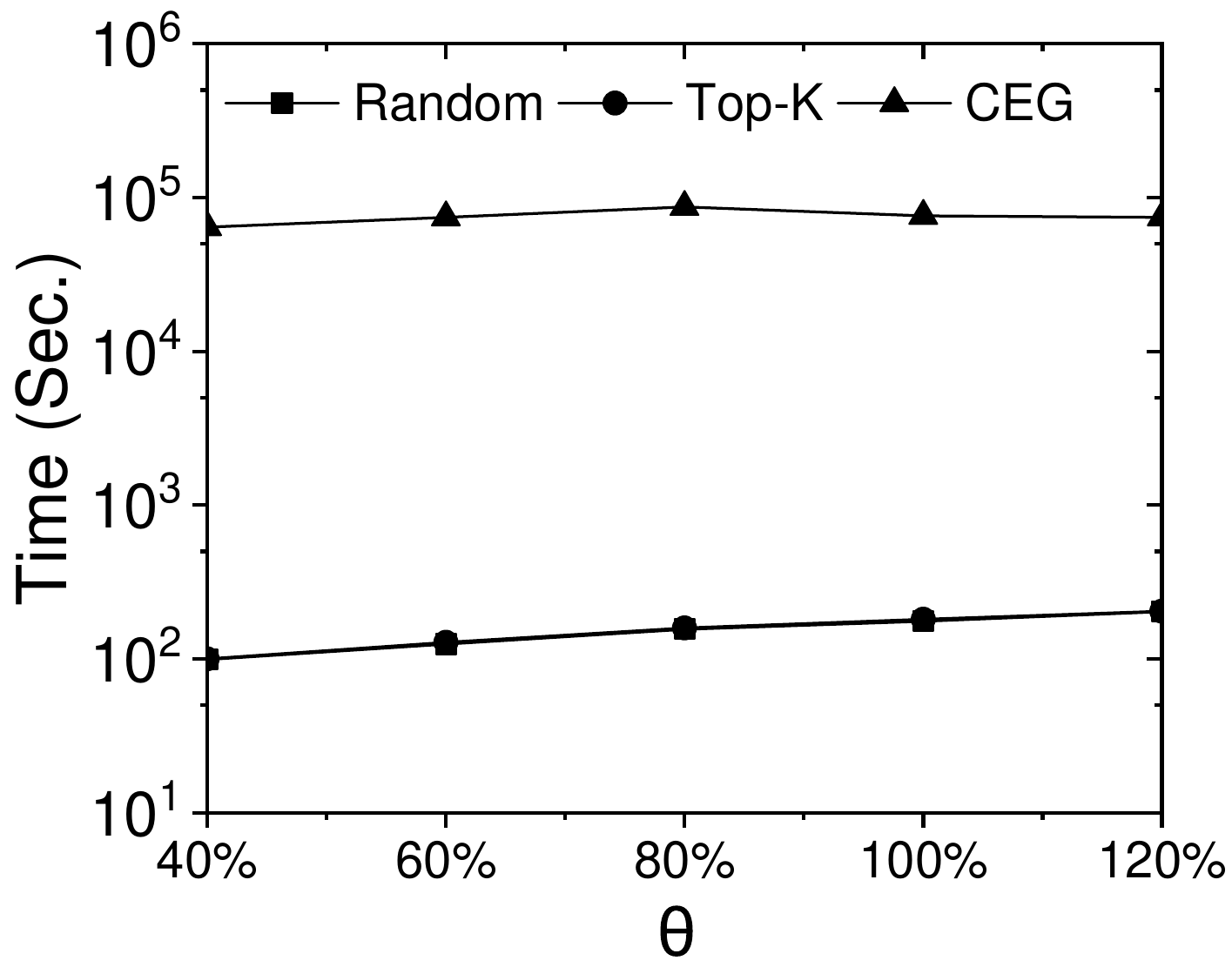} & \includegraphics[scale=0.125]{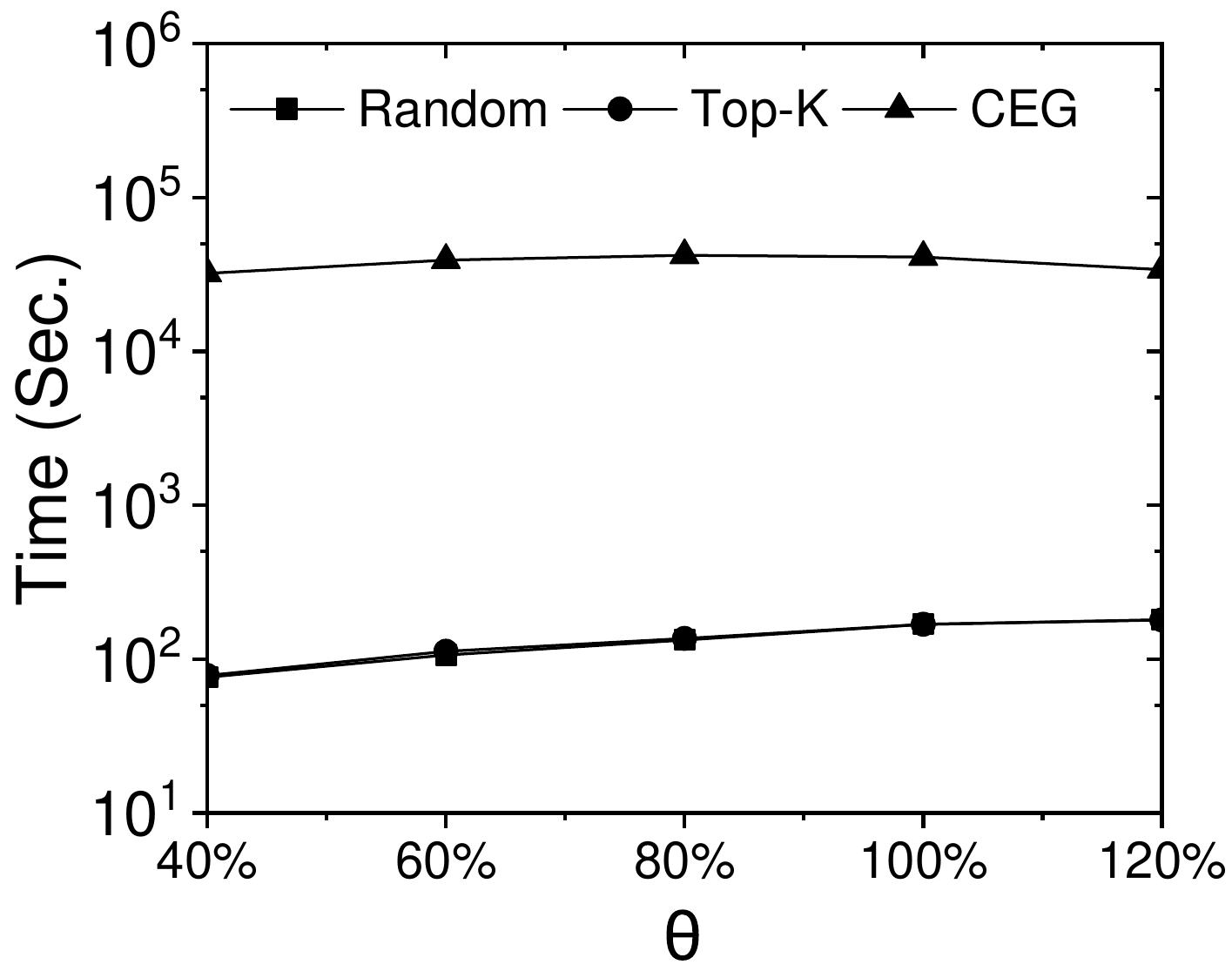} & \includegraphics[scale=0.125]{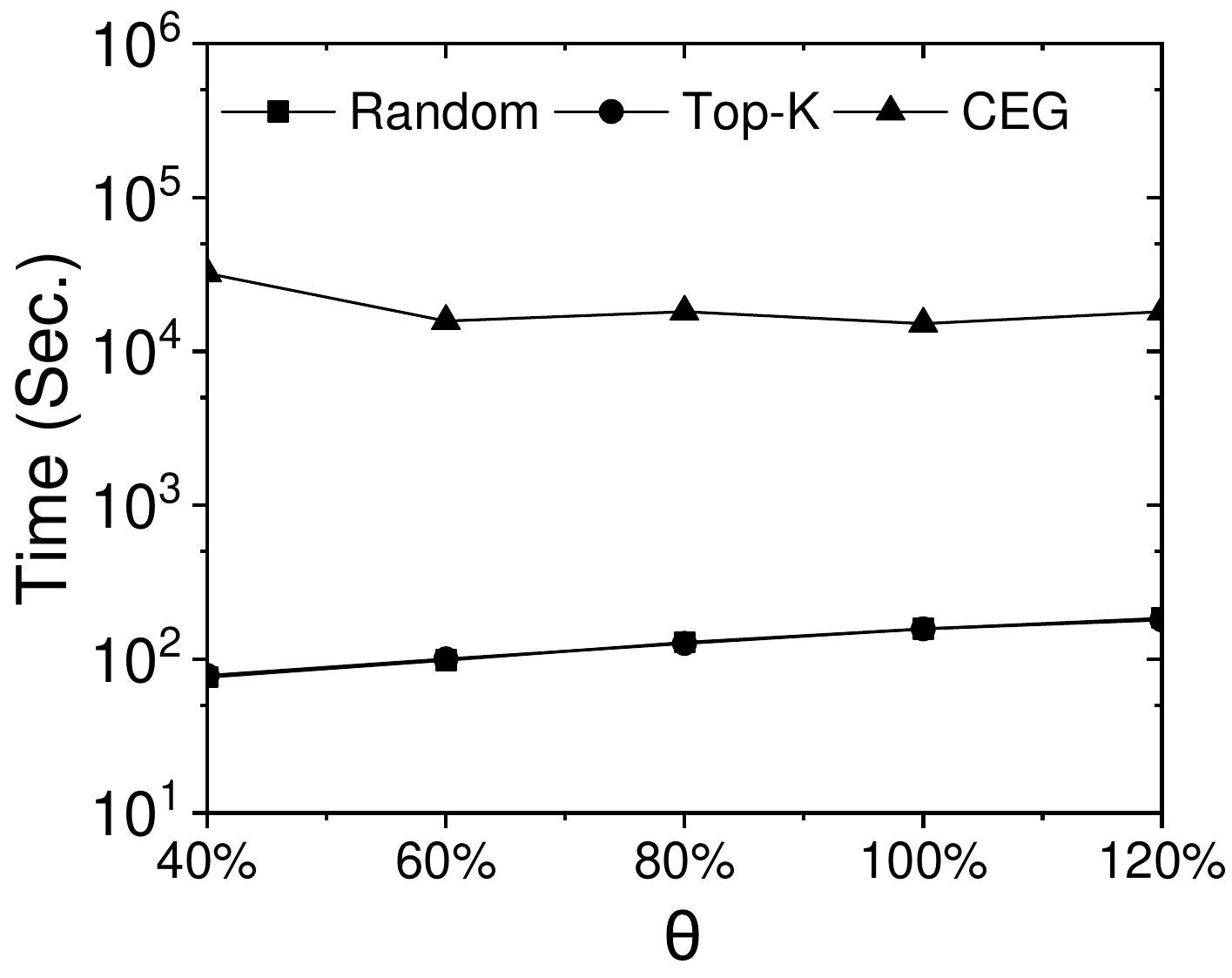} & \includegraphics[scale=0.125]{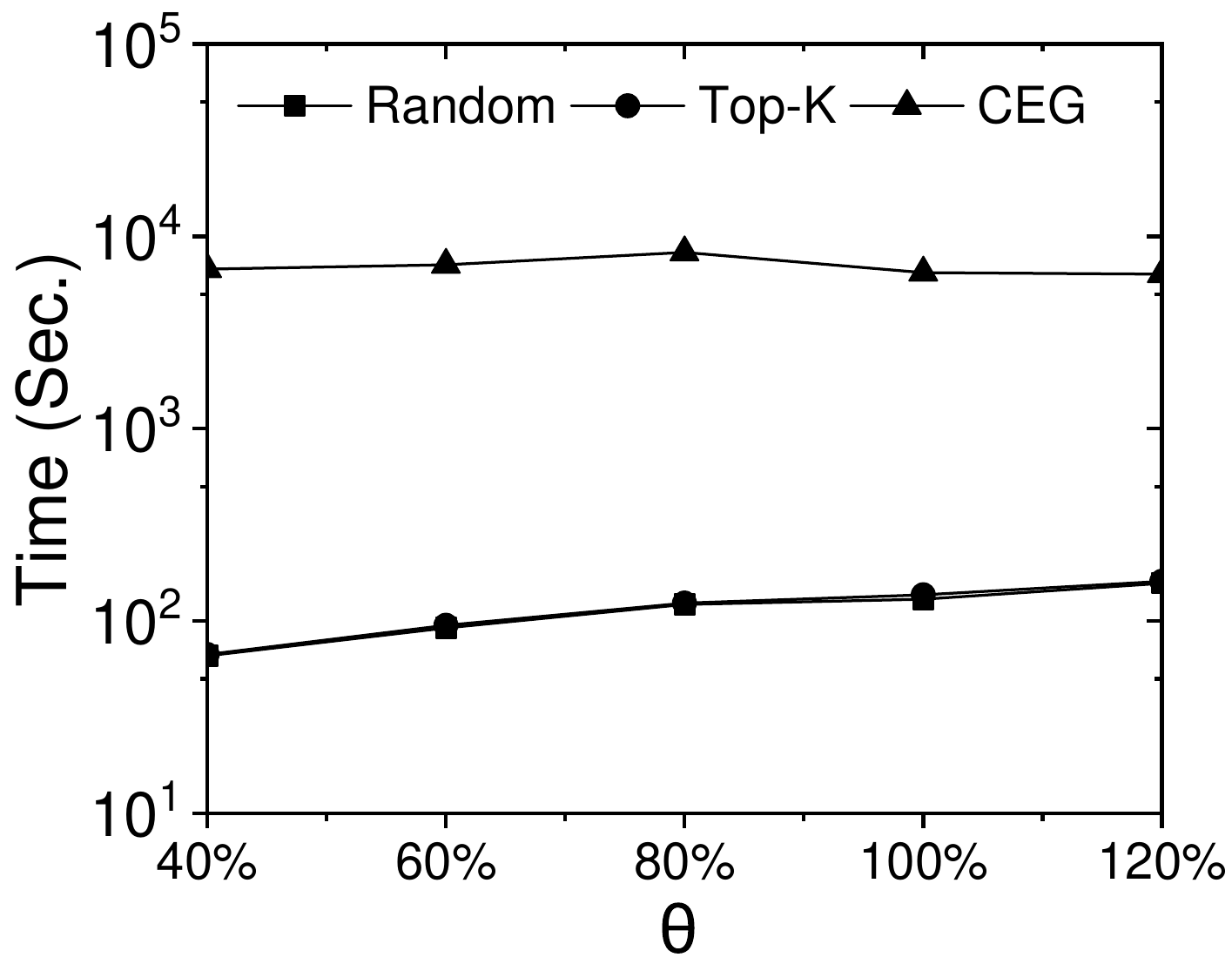}  \\
{\tiny (a) $\delta = 1\%, |\mathcal{T}| = 100$} &
{\tiny (b) $\delta = 2\%, |\mathcal{T}| = 50$} &
{\tiny (c) $\delta = 5\%, |\mathcal{T}| = 20$} &
{\tiny (d) $\delta = 10\%, |\mathcal{T}| = 10$} \\[5pt]
\includegraphics[scale=0.125]{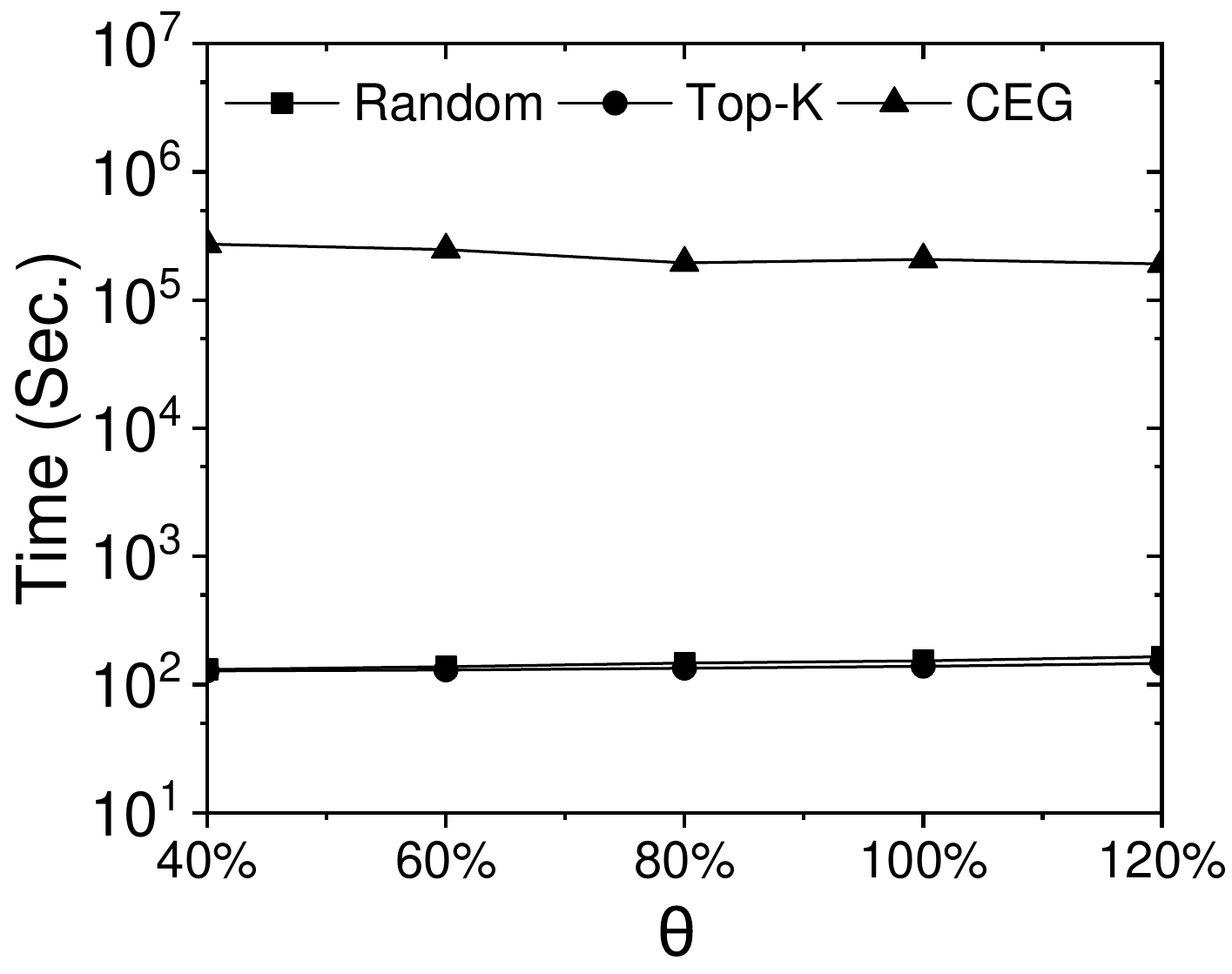} & \includegraphics[scale=0.125]{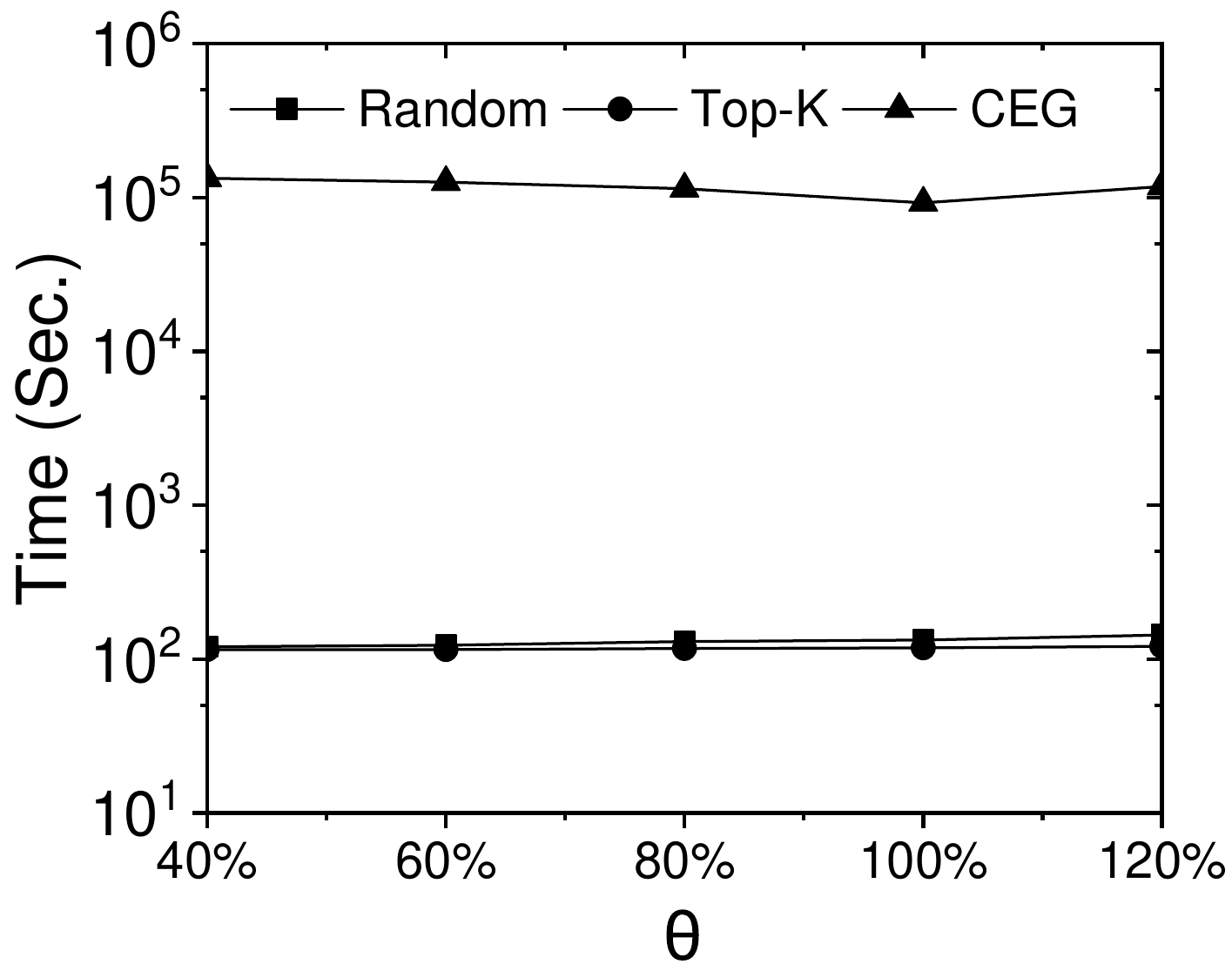} & \includegraphics[scale=0.125]{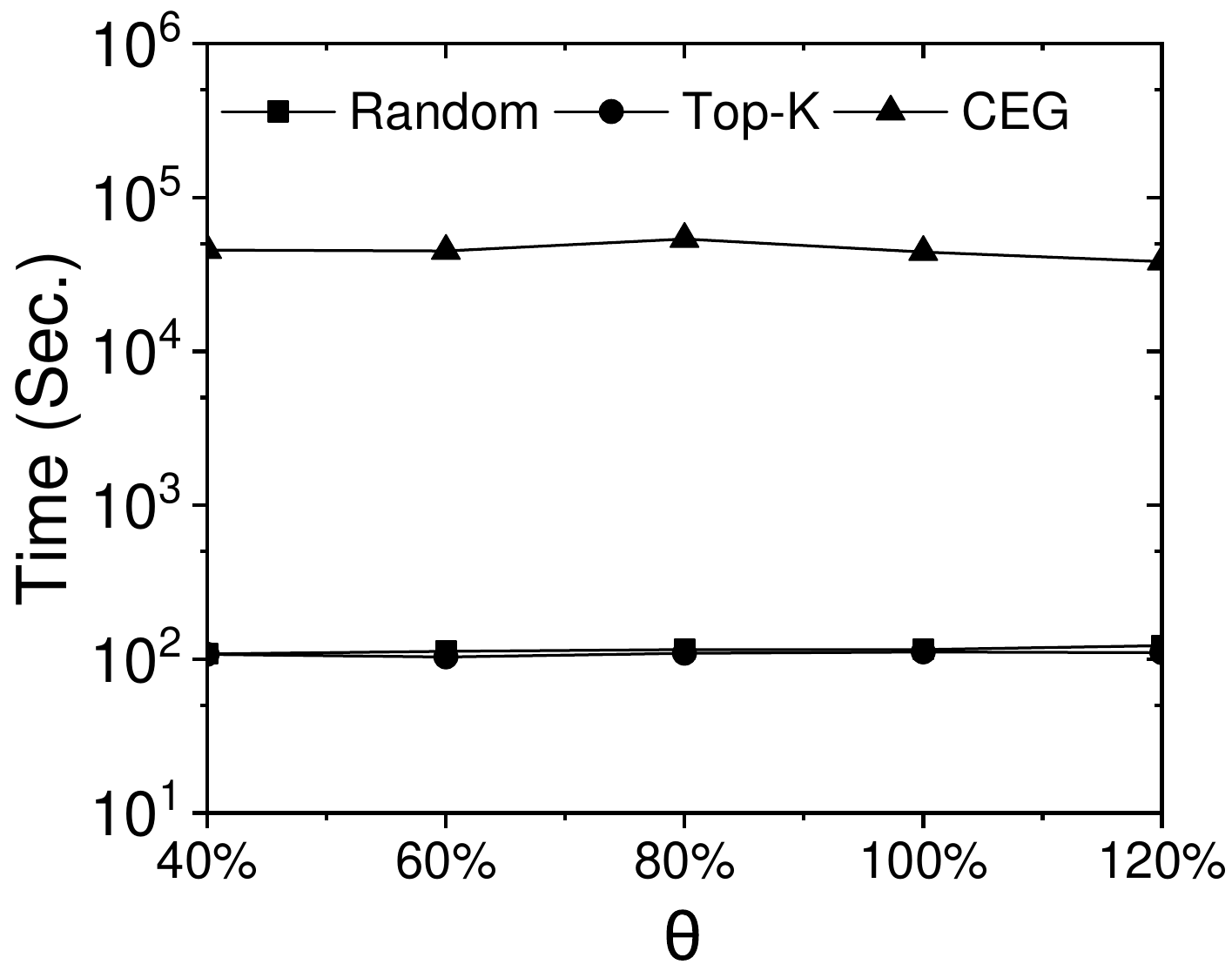} & \includegraphics[scale=0.125]{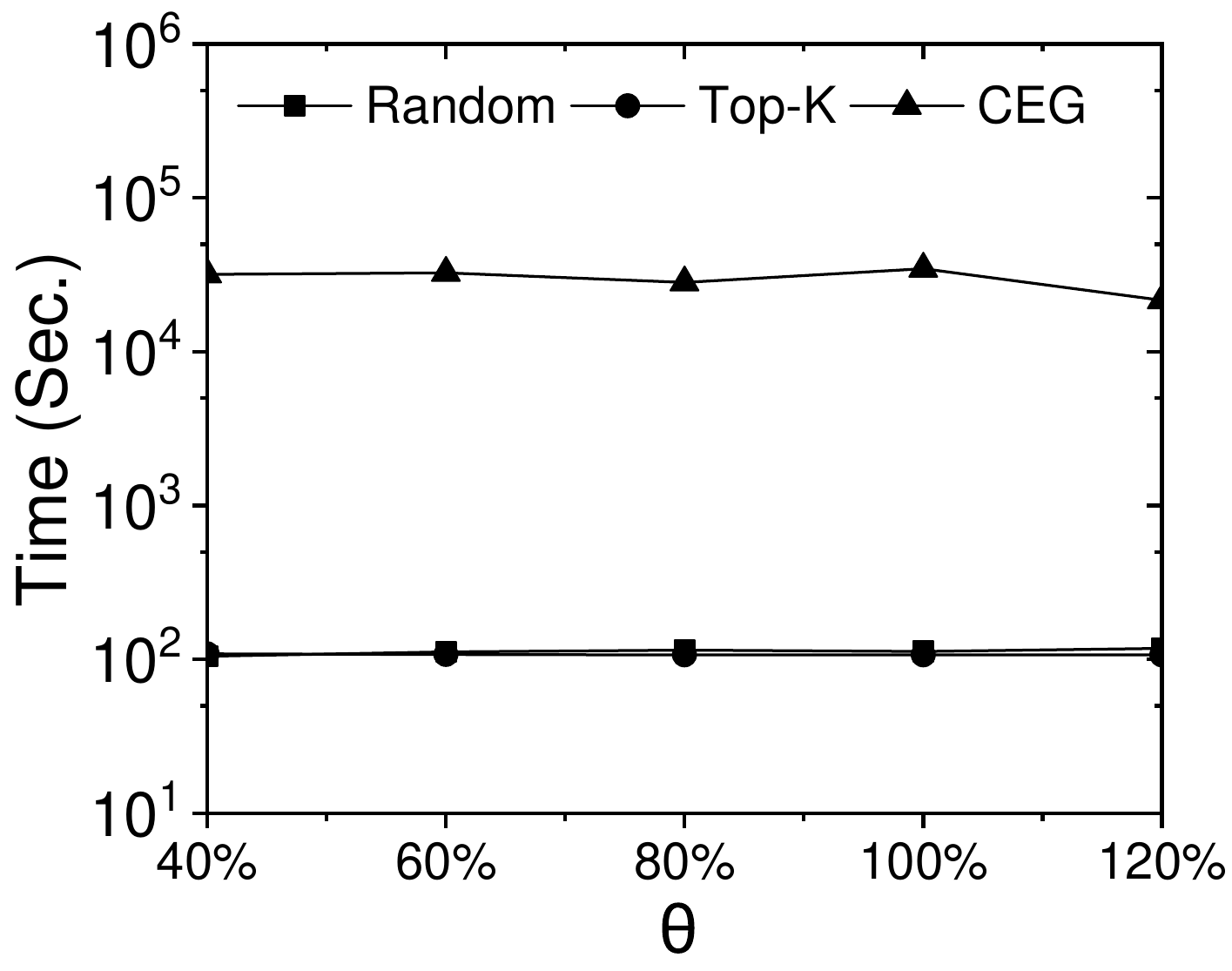}  \\
{\tiny (e) $\delta = 1\%, |\mathcal{T}| = 100$} &
{\tiny (f) $\delta = 2\%, |\mathcal{T}| = 50$} &
{\tiny (g) $\delta = 5\%, |\mathcal{T}| = 20$} &
{\tiny (h) $\delta = 10\%, |\mathcal{T}| = 10$} \\[5pt]
\end{tabular}
\caption{Efficiency study in LA $(a,b,c,d)$, NYC $(e,f,g,h)$}
\label{Fig:TIME}
\end{figure*}
\begin{figure*}[!ht]
\centering
\begin{tabular}{cccc}
\includegraphics[scale=0.125]{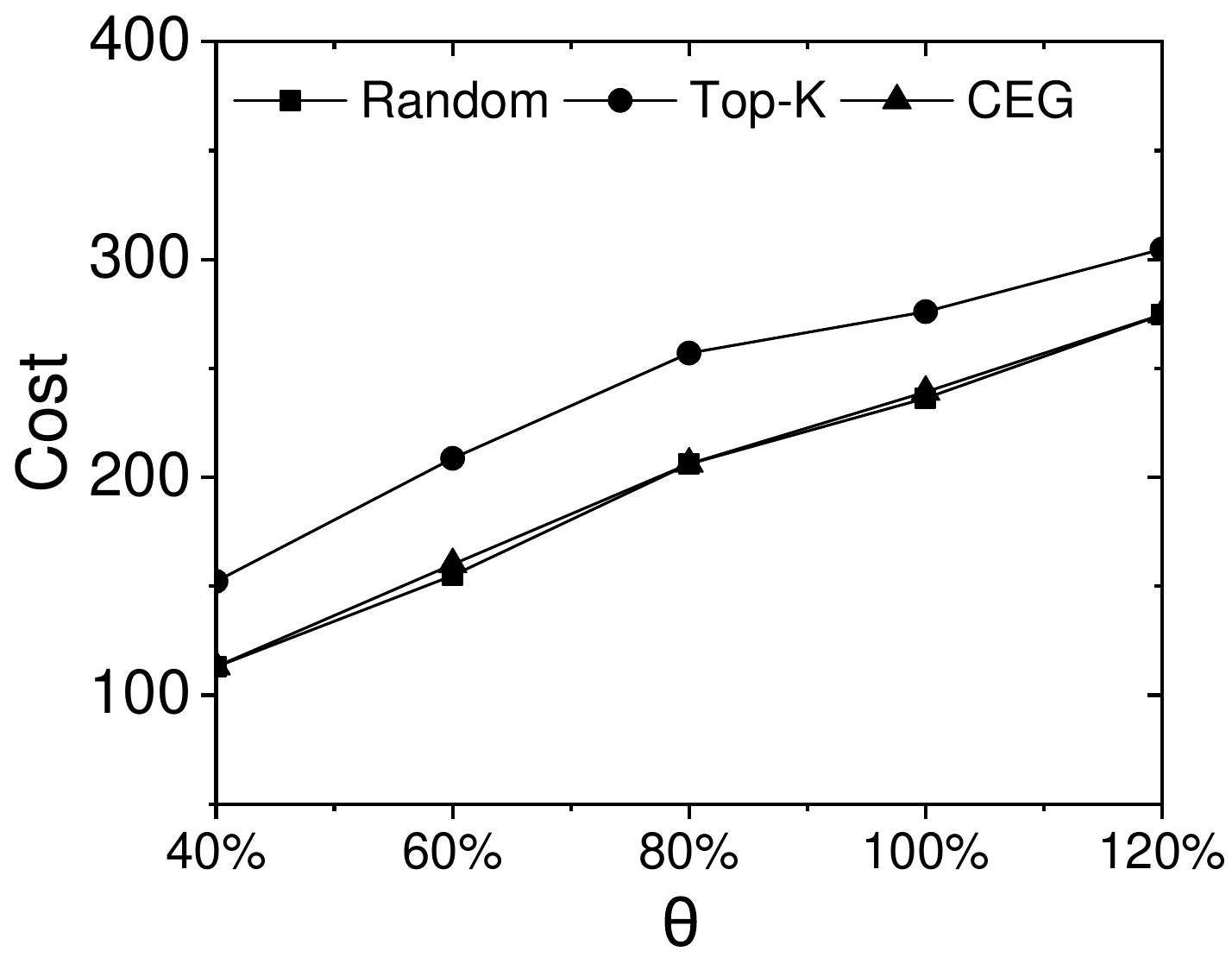} & \includegraphics[scale=0.125]{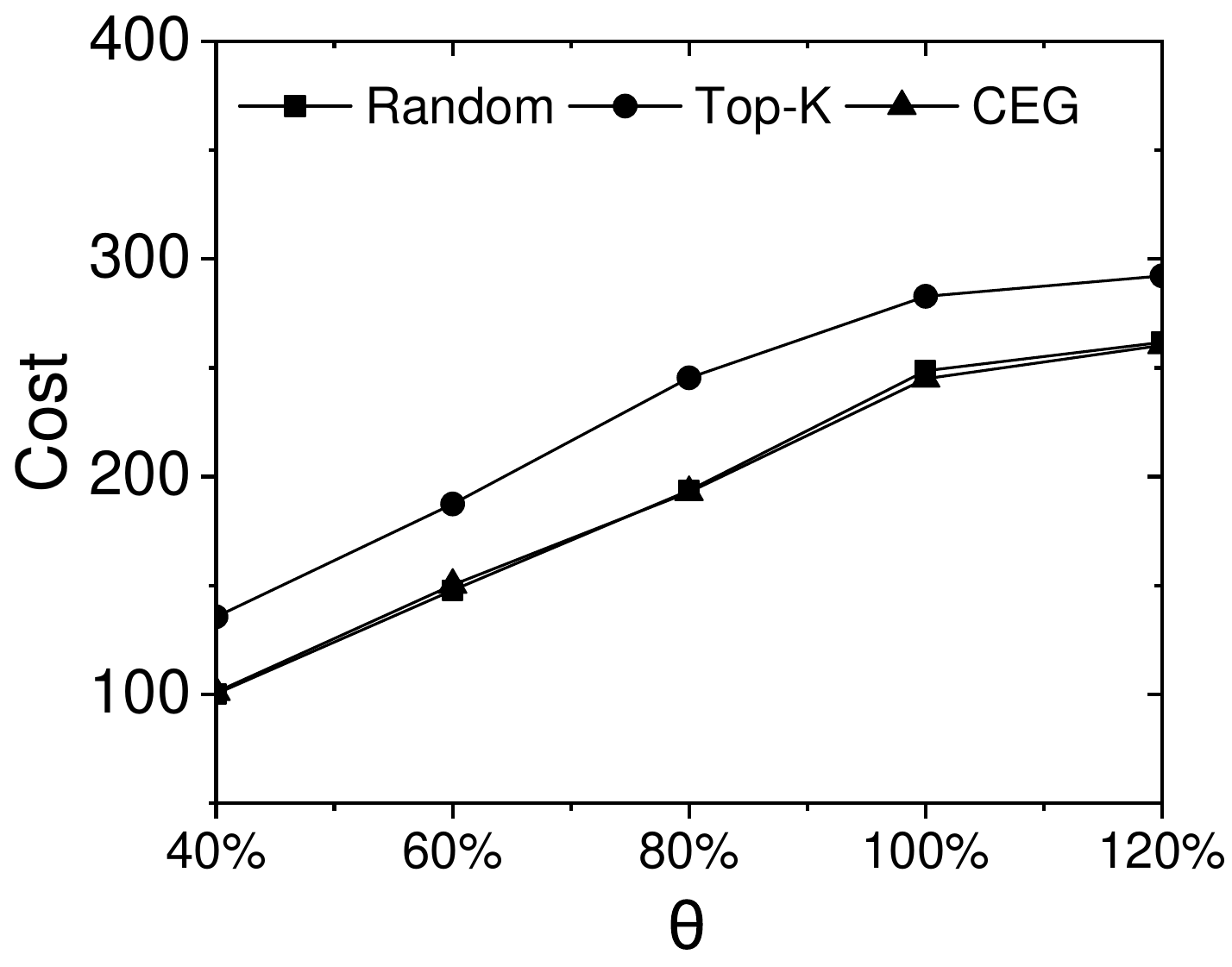} & \includegraphics[scale=0.125]{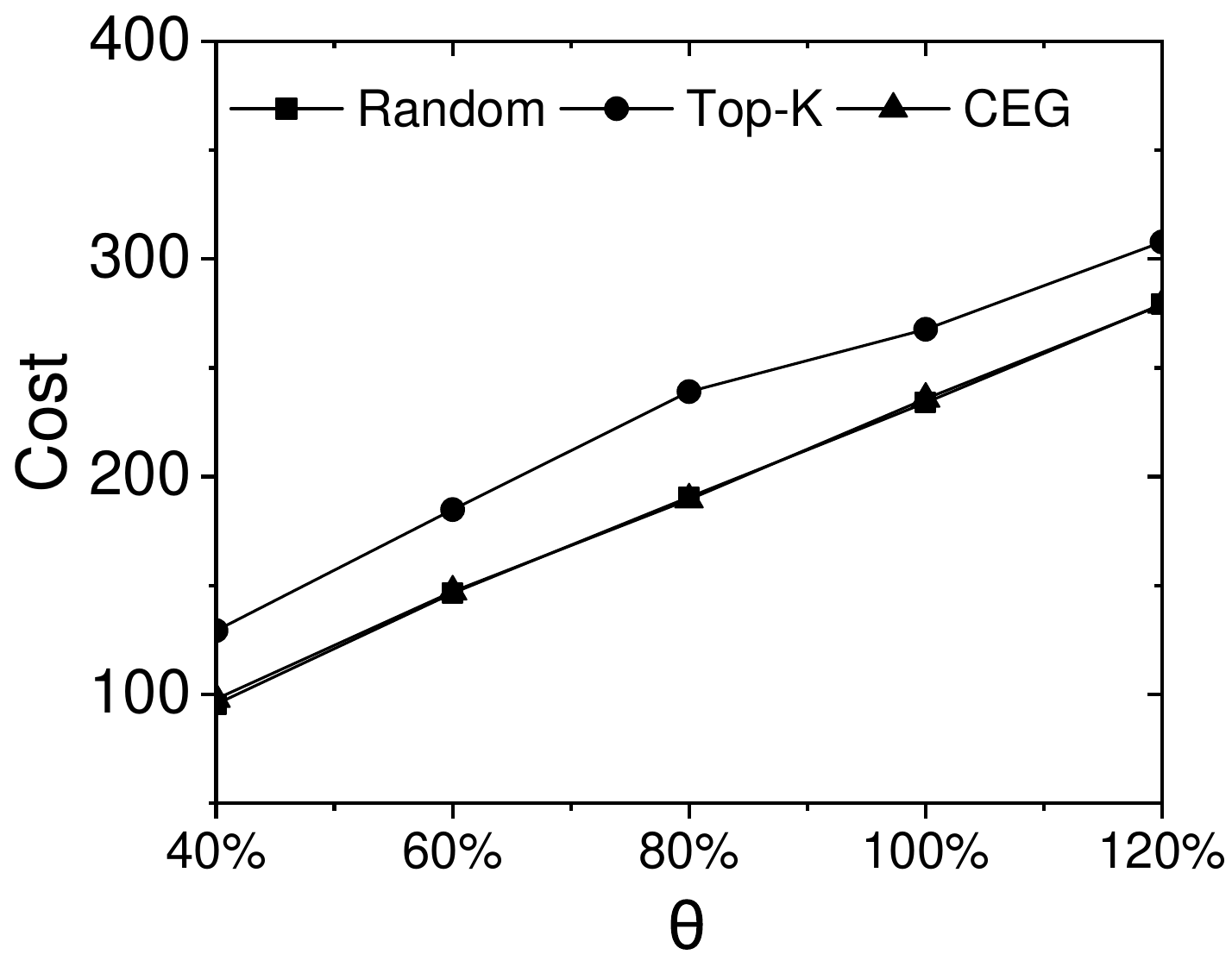} & \includegraphics[scale=0.125]{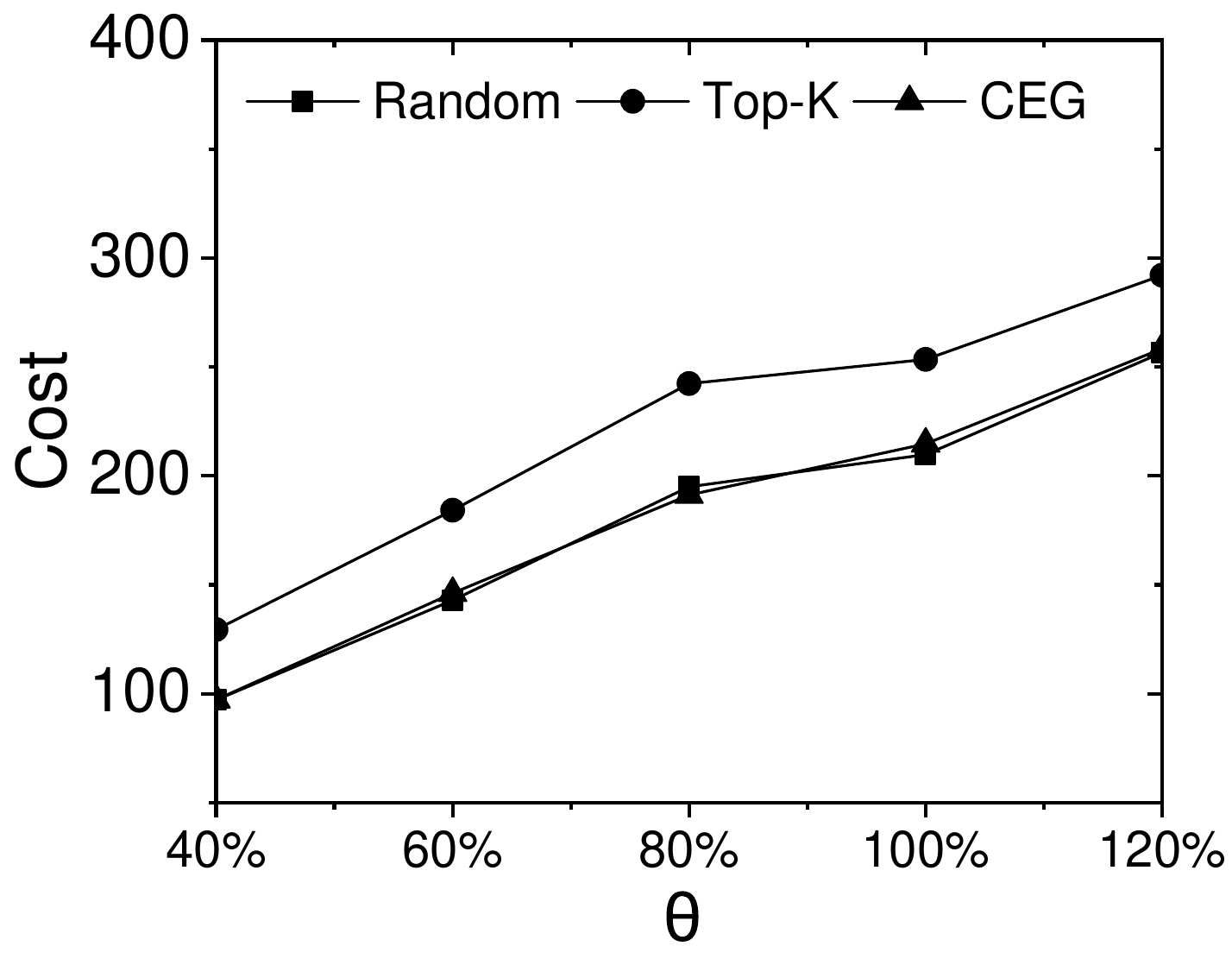} \\
{\tiny (a) $\delta = 1\%, |\mathcal{T}| = 100$} &
{\tiny (b) $\delta = 2\%, |\mathcal{T}| = 50$} &
{\tiny (c) $\delta = 5\%, |\mathcal{T}| = 20$} &
{\tiny (d) $\delta = 10\%, |\mathcal{T}| = 10$} \\[5pt]
\end{tabular}
\caption{ Utilized Cost on varying $\theta$ in LA $(a,b,c,f)$}
\label{Fig:COST}
\end{figure*}

\paragraph{\textbf{ Varying $\theta$, $\delta$ Vs. Time.}}
The efficiency study of our work is important as, in real-life scenarios, an influence provider has thousands of billboard slots in a city like NYC and LA and may have new advertisers every day with a large number of tags for advertisement. In our study, we have three main observations from experiments, which are reported in Figure \ref{Fig:TIME}. First, in most cases, almost equal run time was achieved by both `Random' and `Top-$k$.' They incur much lower runtime compared to the `CEG' approach. This is because of the greedy selection of slots for each tag to be allocated. Second, with the increase of $\theta$ value, more slots need to be deployed to each tag, which requires more computational time. When $\theta \geq 100\%$, the individual demand, and global demand increase, and for this, the run time also increases. Third, with the increase of $\delta$, the run time decreases because the number of tags decreases.

\paragraph{\textbf{ Varying $\theta$, $\delta$ Vs. Cost.}}
The utilization of the budget shows the effectiveness of the proposed `CEG' approach. The findings of the NYC dataset are similar to the LA dataset. We have three main observations from our experiment, as shown in Figure \ref{Fig:COST}. First, with the increases of $\theta$, the cost of allocating tags to slots also increases. This happens because the zone-specific influence demand increases and the influence provider has to assign more slots. Second, the Top-$k$ takes more cost than the `Random' and `CEG' approach because Top-$k$ always picks the most influential slots, which have a high cost. On the other hand, `CEG' chooses slots that take minimum cost to allocate tags. Third, with the increase of $\delta$, the number of advertisers decreases, and allocation cost increases due to the higher influence demand of the tags.

\paragraph{\textbf{Additional Discussion.}}
The additional parameters used in our experiments are distance $(\lambda)$ and different constraints like zonal influence and budget constraints. First, we start describing $\lambda$. In our experiment, we vary $\lambda$ from $25m$ to $150m$ and observe that the influence value also increases with the $\lambda$ value increase. In our experiments, we set $\lambda$ as $100m$ as the default setting. Second, we observe that the increase in the number of zones leads to a smaller number of tags to be handled in the case of the NYC dataset compared to the LA dataset, as LA divided only three zones and NYC divided into five demographic zones. Third, as previously discussed, the computational cost also increases with the increase in the number of tags. Due to the space limitation, we are not able to show experimental observations for the setting $\delta = 20\%, |\mathcal{T}| = 5$ in all the effectiveness and efficiency studies. However, we have some observations of this setting. First, varying $\theta$ value number of handled tags are very less due to higher individual influence demand of the tags. Second, it requires more computational time as the influence provider tries to satisfy all the tags zone-specific high influence demand. Third, the allocation cost for the tag to slots increases due to higher global and individual influence demand.

\section{Concluding Remarks}\label{Sec:Conclusion}
In this paper, we have studied the tag assignment problem in billboard advertisements under zonal influence constraint. We have done an integer programming formulation for this problem. We have proposed a cost-effective greedy algorithm. A complexity analysis of the proposed solution approach has been conducted. We have also shown that the proposed solution approach produces $(\mathcal{P}^{*} +1)$-factor approximation ratio where $\mathcal{P}^{*} = \underset{j \in \mathcal{T}}{Max~~ \mathcal{P}_{j}}$. Experimental evaluation of real-world datasets shows the effectiveness and efficiency of the proposed solution approach. Now, our study can be extended in the following directions. First, this work can be extended to the multi-advertiser setting. Second, this problem can be extended to the influence provider perspective.

\begin{credits}
\subsubsection{\ackname} This work is supported by the Start-Up Research Grant provided by the Indian Institute of Technology Jammu, India (Grant No.: SG100047).
\end{credits}
%
%
%
\bibliographystyle{splncs04}
\bibliography{Paper}

\end{document}